\newenvironment{proof}{{\indent \indent \it Proof:\quad}}{\hfill $\blacksquare$\par}
\newtheorem{lemma}{Lemma}
\newtheorem{theorem}{Theorem}
\newtheorem{definition}{Definition}
\def\max{\mbox{max}\,}
\journal{Nuclear Physics B}
\begin{document}

\begin{frontmatter}

%% Title, authors and addresses

%% use the tnoteref command within \title for footnotes;
%% use the tnotetext command for theassociated footnote;
%% use the fnref command within \author or \address for footnotes;
%% use the fntext command for theassociated footnote;
%% use the corref command within \author for corresponding author footnotes;
%% use the cortext command for theassociated footnote;
%% use the ead command for the email address,
%% and the form \ead[url] for the home page:
%% \title{Title\tnoteref{label1}}
%% \tnotetext[label1]{}
%% \author{Name\corref{cor1}\fnref{label2}}
%% \ead{email address}
%% \ead[url]{home page}
%% \fntext[label2]{}
%% \cortext[cor1]{}
%% \affiliation{organization={},
%%             addressline={},
%%             city={},
%%             postcode={},
%%             state={},
%%             country={}}
%% \fntext[label3]{}

\title{ Low Rank Quaternion Matrix Recovery via Logarithmic Approximation}
\author[lab1]{Liqiao Yang}
\ead{liqiaoyoung@163.com}
\author[lab1]{Jifei Miao}
\ead{jifmiao@163.com}
\author[lab1]{Kit Ian Kou\corref{cor1}}
\ead{kikou@umac.mo}

\address[lab1]{Department of Mathematics, Faculty of
	Science and Technology, University of Macau, Macau 999078, China}

\cortext[cor1]{Corresponding author}
%% use optional labels to link authors explicitly to addresses:
%% \author[label1,label2]{}
%% \affiliation[label1]{organization={},
%%             addressline={},
%%             city={},
%%             postcode={},
%%             state={},
%%             country={}}
%%
%% \affiliation[label2]{organization={},
%%             addressline={},
%%             city={},
%%             postcode={},
%%             state={},
%%             country={}}

\begin{abstract}
In color image processing, image completion aims to restore missing entries from the incomplete observation image. Recently,  great progress has been made in achieving completion by approximately solving the rank minimization problem. In this paper, we utilize a novel quaternion matrix logarithmic norm to approximate  rank under the quaternion matrix framework. From one side, unlike the traditional matrix completion method that handles RGB channels separately, the quaternion-based method is able to avoid destroying the structure of images via putting the color image in a pure quaternion matrix. From the other side, the logarithmic norm induces a more accurate rank surrogate. Based on the logarithmic norm, we take advantage of not only truncated technique but also factorization strategy to achieve image restoration. Both strategies are optimized based on the  alternating minimization framework. The experimental results demonstrate that the use of logarithmic surrogates in the quaternion domain is more superior in solving the problem of color images completion.      	 	
\end{abstract}

%%Graphical abstract
%\begin{graphicalabstract}
%%\includegraphics{grabs}
%\end{graphicalabstract}

%%Research highlights
%\begin{highlights}
%\item Research highlight 1
%\item Research highlight 2
%\end{highlights}

\begin{keyword}
Image completion\sep low rank \sep quaternion matrix \sep logarithmic norm.%% keywords here, in the form: keyword \sep keyword

%% PACS codes here, in the form: \PACS code \sep code

%% MSC codes here, in the form: \MSC code \sep code
%% or \MSC[2008] code \sep code (2000 is the default)

\end{keyword}

\end{frontmatter}

%% \linenumbers

%% main text
\section{Introduction}
\label{sec:intro}
Image completion is one of the key branches in image processing, which has attracted extensive research interests in recent years \cite{DBLP:journals/pami/HuZYLH13,DBLP:journals/pami/ShangCLLL18,DBLP:journals/tip/ChenXZ20,DBLP:journals/tsp/MiaoK20}. When processing color images, the most traditional matrix-based image completion algorithms handle the three-dimensional data by unfolding it into
matrices, which breaks the inherent correlation among RGB channels. It would cause some important information to be lost. This method of matrix-based image completion  is devoted to solving the following model
\begin{equation}\label{m1}
\min\limits_{\mathbf{X}} \mathit{f}(\mathbf{X}),  \qquad  \text{ s.t.}   \quad   P_\Omega(\mathbf{X}-\mathbf{M})=0,
\end{equation}
where $\mathit{f}(\mathbf{X})$ is the cost function, $\mathbf{M}\in \mathbb{R}^{m\times n}$ is the observed matrix with missing entries, and $\mathbf{X}\in \mathbb{R}^{m\times n}$ is the recovered matrix. The linear operation $P_\Omega(\cdot)$ is developed to indicate the elements in $\Omega$ with 1 representing observed changeless elements and 0 representing missing elements. Since the structure of visual data has strong non-local self-similarity, it normally exhibits low rank characteristics \cite{DBLP:journals/tip/ZhouLLZ18}. Therefore, the cost function in (\ref{m1})  can be replaced by the rank function, and the low rank matrix completion (LRMC) model is formulated as
\begin{equation}\label{m2}
\min\limits_{\mathbf{X}} \text{rank} (\mathbf{X}),  \qquad  \text{ s.t.}   \quad   P_\Omega(\mathbf{X}-\mathbf{M})=0.
\end{equation}
As a combined optimization problem, problem (\ref{m2}) is generally solved by optimizing a convex proxy of the rank function \cite{DBLP:journals/focm/CandesR09}, because directly dealing with the rank function  is NP-hard. In the previous study \cite{DBLP:journals/focm/CandesR09}, the nuclear norm of the matrix ($\parallel \mathbf{X}\parallel_*=\sum_{i=1}^{\min(m,n)}\sigma_{i}(\mathbf{X})$, where $\sigma_{i}(\mathbf{X})$ is the i\emph{th}  singular value of  $\mathbf{X}$) is proved to be effective in replacing the rank function.

Many existing studies demonstrate that LRMC algorithms based on the nuclear norm may produce sub-optimal results because every singular value is processed equally \cite{DBLP:journals/pami/HuZYLH13, DBLP:conf/cvpr/GuZZF14}. This approach  goes against the fact that a larger singular value may contain more rich and useful information than a smaller one \cite{DBLP:journals/pami/HuZYLH13,DBLP:journals/tip/XieGLZZZ16}. To overcome these  shortcomings, many nonconvex alternatives have been designed. For example, the truncated nuclear norm $\parallel\mathbf{X}\parallel_r=\sum_{i=r+1}^{\min(m,n)}\sigma_{i}(\mathbf{X})$ (TNN),  the sum of $\min(m,n)-r$ minimum singular values of $\mathbf{X}$, is proposed in \cite{DBLP:journals/pami/HuZYLH13}. Moreover, the proposed two-step optimization process in \cite{DBLP:journals/pami/HuZYLH13} is effective.There are other similar surrogates to TNN, such as weighted nuclear norm \cite{DBLP:conf/cvpr/GuZZF14} and weighted Schatten p-norm \cite{DBLP:journals/tip/XieGLZZZ16}. Similarly, the log-determinant alternate, \emph{ i.e.}, $L(\mathbf{X},\epsilon)=\sum_{i=r+1}^{\min(m,n)}\log(\sigma_{i}(\mathbf{X})+\epsilon)$ with $\epsilon>0$, is also proved to be a more precise approximation of the rank than nuclear norm \cite{DBLP:journals/cin/KangPCC15,xie2018tensor}. The value of the logarithmic function becomes gentle as the singular values increase, meanwhile, the smaller singular values can be penalized more.  

The strategies mentioned above are based on optimizing the approximation of the rank, which need to process whole singular value decomposition (SVD) of the matrix $\mathbf{X}$. Whereas, the computational complexity of SVD is high, which makes the promotion of high dimensional or big data have limitations \cite{DBLP:journals/pami/ShangCLLL18, DBLP:journals/mpc/WenYZ12, DBLP:journals/corr/LinCM10}.  To address this concern, the low rank matrix factorization (LRMF) is applied to matrix completion. These methods describe the low rank property of target matrix $\mathbf{X}$ by factorizing it to two smaller factor matrices, \emph{ i.e.},  $\mathbf{X}=\mathbf{U}\mathbf{V}^T$ , where $\mathbf{U}\in \mathbb{R}^{m\times r}$ and $\mathbf{V}\in \mathbb{R}^{n\times r}$ with $r\ll \min {(m,n)}$. Although this method naturally satisfies the low rank requirement of the target matrix and also benefits from fast numerical optimization, it lacks the estimated value rank of  $\mathbf{X}$ in many situations. In \cite{DBLP:journals/mpc/WenYZ12}, this problem is solved by an approximate skill. Besides, using a factorized approach instead of the spectral norm defined by one kind of weighted nuclear norm, would lead to a local minimum solution \cite{DBLP:journals/jmlr/AbernethyBEV09}.   

There is a necessary dimensional reduction operation  based on LRMC to process color image completion. Because the above matrix-based approaches are focused on two dimensional data, color images can only be processed separately for RGB channels.  In this way, they may lose the inner relationships among the three channels. In this regard,  color image processing based on the quaternion framework has received widespread attention. Thanks to the structure of  quaternion itself, each pixel of the color image can be represented by a pure quaternion to form a quaternion matrix. This representation is fully utilized in color images edge detection  \cite{hu2018phase}, color face recognition  \cite{DBLP:journals/access/ZouKDZT19}, color image denoising \cite{DBLP:journals/ijon/YuZY19,DBLP:journals/mssp/GaiYW015}, color image completion \cite{DBLP:journals/tip/ChenXZ20,DBLP:journals/tsp/MiaoK20}, and so on. 

In particular, when processing  color image completion in the quaternion domain, the authors  extended a low rank quaternion approximation (LRMA) model from LRMC \cite{DBLP:journals/tip/ChenXZ20}. It based on  three nonconvex  functions for approximating the rank of the quaternion matrix, including weighted Schatten p-norm, Laplace approximation, and Geman approximation.  All these functions highlight the advantages of the quaternion matrix that have been verified experimentally and theoretically. However, as described above in the matrix cases,  these approaches need to process fully  quaternion singular value decomposition (QSVD) for quaternion matrix with  expensive computational cost. To overcome this  imperfection, the LRMF strategies are extended to the quaternion domain  \cite{DBLP:journals/tsp/MiaoK20}. The authors factorized the target quaternion matrix to two-factor quaternion matrices and designed three quaternion-based bilinear factor matrix norm factorization methods  for low-rank quaternion matrix completion (LRQMC). The methods of LRQMC include quaternion double Frobenius norm (Q-DFN), quaternion double nuclear norm (Q-DNN), and quaternion Frobenius/nuclear norm (Q-FNN).  All variables are optimized under the  alternating direction method of multipliers (ADMM) framework which is similar to the real field. These factorized-based approaches only need to handle two smaller quaternion matrices. Thus considerable  computational cost could be reduced.
 
In order to approximate the rank more accurately and effectively in the quaternion domain that takes full advantage of the color image structure, we develop two algorithms for LRQMC named Quaternion Logarithmic norm-based Factorization (QLNF)  and Truncated Quaternion Logarithmic norm-based Approximation (TQLNA). Meanwhile, these two algorithms can narrow the gap between low-rank minimization and low-rank factorization.  In this paper, a new quaternion logarithmic norm (QLN) is introduced as a non-convex replacement of rank, where the low rank  can be more accurately depicted than traditional approximation methods like quaternion nuclear norm (QNN). Then, we adopt QLN to  two smaller quaternion matrices,  factor quaternion matrices of the target quaternion matrix in QLNF. In TQLNA, we truncate QLN and then optimize the target quaternion matrix directly by using the truncated quaternion  logarithmic norm (TQLN). Concretely, the main contributions of this paper are summarized as follows: 
\begin{itemize}
	\item We propose a new nonconvex surrogate QLN for the rank of quaternion matrix with the guarantee of a surrogate theorem, and  act QLN on factor quaternion matrix. 
	
	\item  The rank of the quaternion matrix would not be changed by the influence of the largest $r$ singular values. So the proposed TQLN is operated by truncating the largest $r$ singular values and then by adopting QLN in the truncated issue.
	
	\item For surrogate QLN, the fast iterative shrinkage thresholding algorithm (FISTA)  is used to optimize the factor quaternion matrix. For TQLN, the ADMM based algorithm is used to efficiently solve the optimization problems. The experimental results prove that our method is effective in color image completion. 
\end{itemize}

The rest of this paper is organized in the following structure. In Section \ref{sec:1}, related notations and preliminary knowledge in the quaternion domain will be reviewed. In Section \ref{sec:2}, we will discuss the quaternion-based model for image completion and depict QLNF and TQLNA algorithms. In Section \ref{Experimental results}, this study will display the numerical experiment to compare other algorithms, and then a conclusion will be presented in Section \ref{Conclusion}. Part of the proofs will be listed in the Appendix.

\section{Notations and preliminaries}\label{sec:1}

In this section, the main notations and  the preliminary knowledge based on quaternions adopted throughout this paper are introduced briefly.

\subsection{Notations}
we use a, $\textbf{a}$, and  $\textbf{A}$ to  denote scalar, vector and matrix in real domain $\mathbb{R}$, respectively. A dot above the variables ($\dot{a}$, $\dot{\textbf{a}}$, and $\dot{\textbf{A}}$) are to denote the variables  ( scalars, vectors, and matrices) in the quaternion domain $\mathbb{H}$. Besides, the complex space is denoted by $\mathbb{C}$. Both $\textbf{I}_{r\times r}$ and $\textbf{I}_{r}$ represent the $r\times r$ identity matrix.  $\mathfrak{R}(\dot{a})$ is the real part of quaternion $\dot{a}$.  $\mathbf{(\cdot)}^T$, $\mathbf{(\cdot)}^*$, $\mathbf{(\cdot)}^H$ and $\mathbf{(\cdot)}^{-1}$ represent the transpose, conjugate, conjugate transpose and the inverse of  $\mathbf{\cdot}$, respectively. $|\cdot|$, $\|\cdot\|_F$, $\|\cdot\|_*$ are the absolute value of modulus, Frobenius norm, and nuclear norm of $\cdot$, respectively. The inner product of $\circ_1$ and $\circ_2$ is defined as $\langle\circ_1\cdot\circ_2\rangle\triangleq \text{tr}(\circ_1^H\circ_2)$, and tr$(\cdot)$ is the trace function.

\subsection{Preliminary sketch of quaternion-based knowledge}
Quaternion space $\mathbb{H}$ was first proposed by W. Hamilton  \cite{doi:10.1080/14786444408644923}.  As a generalization of complex number, a quaternion number $\dot{a}\in\mathbb{H}$ can be represented by Cartesian form 
\begin{equation}\label{m3}
 \dot{a}=a_0+a_1\emph{i}+a_2\emph{j}+a_3\emph{k},
\end{equation}
  where $a_l\in\mathbb{R} (l=0,1,2,3)$, and \emph{i, j, k} are three imaginary number units which have the following relationships
  \begin{equation}\label{m4}
  \begin{cases}
  \emph{i}^2= \emph{j}^2 =\emph{k}^2= \emph{i}\emph{j}\emph{k}=-1\\
  \emph{i}\emph{j}=-\emph{j}\emph{i} = \emph{k},    \emph{j}\emph{k}=-\emph{k}\emph{j} = \emph{i},  \emph{k}\emph{i}=-\emph{i}\emph{k} = \emph{j}.
  \end{cases}
  \end{equation}
    $\mathfrak{R}(\dot{a}) \triangleq a_0$ is the real part of  $\dot{a}$. $\mathfrak{I}(\dot{a}) \triangleq a_1\emph{i}+a_2\emph{j}+a_3\emph{k}$ is the  imaginary part of $\dot{a}$.  Hence $\dot{a}=\mathfrak{R}(\dot{a})+\mathfrak{I}(\dot{a})$. Besides, when real part $a_0 = 0$, $\dot{a}$ is  a pure quaternion.  The conjugate and the modulus of $\dot{a}$  are defined as: $\dot{a}^{*} = a_0-a_1\emph{i}-a_2\emph{j}-a_3\emph{k}$  and $|\dot{a}|= \sqrt{\dot{a}\dot{a}^{*}}=\sqrt{a_0^2+a_1^2+a_2^2+a_3^2}$, separately. Assuming two quaternions $\dot{a}$ and $\dot{b}\in\mathbb{H}$,  the addition and multiplication are  separately defined as following 

\begin{equation}\nonumber 
   \dot{a}+\dot{b}=(a_0+b_0)+(a_1+b_1)\emph{i}+(a_2+b_2)\emph{j}+(a_3+b_3)\emph{k}
\end{equation}
\begin{equation}\nonumber
 \begin{aligned}
   \dot{a}\dot{b}=& (a_0b_0-a_1b_1-a_2b_2-a_3b_3)\\&+(a_0b_1+a_1b_0+a_2b_3-a_3b_2)\emph{i}
   \\&+(a_0b_2-a_1b_3+a_2b_0+a_3b_1)\emph{j}\\&+(a_0b_3+a_1b_2-a_2b_1+a_3b_0)\emph{k}.
 \end{aligned}
\end{equation}
What calls for special attention is that the multiplication in the quaternion domain is not commutative $\dot{a}\dot{b} \neq \dot{b}\dot{a}$.

The quaternion matrix $\dot{\textbf{A}}=(\dot{a}_{ij})\in\mathbb{H}^{M \times N}$, where $\dot{\textbf{A}}=\textbf{A}_0+\textbf{A}_1\emph{i}+\textbf{A}_2\emph{j}+\textbf{A}_3\emph{k}$. In the definition of  $\dot{\textbf{A}}$,  $\textbf{A}_l\in\mathbb{R}^{M \times N} (l=0,1,2,3)$ which are real matrices. When  $\textbf{A}_0 = \textbf{0}$, $\dot{\textbf{A}}$ is a pure quaternion matrix. The Forbenius norm is defined as:  $\parallel\dot{\textbf{A}}\parallel_F = \sqrt{\sum_{i=1}^{M}\sum_{j=1}^{N}|\dot{a}_{ij}|^2}=\sqrt{tr(\dot{\textbf{A}}^H\dot{\textbf{A}})}$. The form using Cayley-Dickson notation \cite{DBLP:journals/sigpro/BihanM04} is $\dot{\textbf{A}}=\textbf{A}_p+\textbf{A}_q\emph{j}$, where $\textbf{A}_p$ and $\textbf{A}_q\in\mathbb{C}^{M \times N}$ are complex matrices. Hence, the representation of  quaternion matrix $\dot{\textbf{A}}$ can be denoted as an isomorphic complex  matrix $\textbf{A}_c\in\mathbb{C}^{2M \times 2N}$
\begin{equation}\label{ct}
\textbf{A}_c={
\left( \begin{array}{cc}
\textbf{A}_p & \textbf{A}_q  \\
-\textbf{A}_q^* & \textbf{A}_p^*\\
\end{array}
\right )_{2M \times 2N}},
\end{equation}
where  $\textbf{A}_p=\textbf{A}_0+\textbf{A}_1\emph{i}\in\mathbb{C}^{M \times N}$ and $\textbf{A}_q=\textbf{A}_2+\textbf{A}_3\emph{i}\in\mathbb{C}^{M \times N}$.

\begin{definition}(\textbf{The rank of quaternion matrix} \cite{zhang1997quaternions}): For a quaternion matrix $\dot{\textbf{A}}=(\dot{a}_{ij})\in\mathbb{H}^{M \times N}$, the  maximum number of right (left) linearly independent columns (rows)  is defined as the rank of  $\dot{\textbf{A}}$.
\end{definition}

\begin{theorem}(\textbf{QSVD} \cite{zhang1997quaternions}):
\label{theorem1}
 Given a quaternion matrix $\dot{\textbf{A}}\in\mathbb{H}^{M \times N}$ be of rank
$r$. There are two unitary quaternion matrices $\dot{\textbf{U}}\in\mathbb{H}^{M \times M}$
and $\dot{\textbf{V}}\in\mathbb{H}^{N \times N}$ such that
\begin{equation}
\dot{\mathbf{A}}=\dot{\mathbf{U}}
\left( \begin{array}{cc}
\mathbf{\Sigma}_r & \mathbf{0}  \\
\mathbf{0} & \mathbf{0}\\
\end{array}
\right )\dot{\mathbf{V}}^H= \dot{\mathbf{U}}\mathbf{\Lambda}\dot{\mathbf{V}}^H,
\end{equation}
where $\mathbf{\Sigma}_r=diag({\sigma_1,\cdots, \sigma_r})\in\mathbb{R}^{r\times r}$, and all singular values $\sigma_i>0,  i=1,\cdots,r$ . 
\end{theorem}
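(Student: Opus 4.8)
The plan is to deduce the quaternion SVD from the ordinary complex SVD through the complex adjoint representation $\chi:\dot{\mathbf{A}}\mapsto\mathbf{A}_c$ of \eqref{ct}. First I would record the algebraic properties of $\chi$: it is injective and $\mathbb{R}$-linear, and satisfies $\chi(\dot{\mathbf{A}}\dot{\mathbf{B}})=\chi(\dot{\mathbf{A}})\chi(\dot{\mathbf{B}})$, $\chi(\dot{\mathbf{A}}^H)=\chi(\dot{\mathbf{A}})^H$, $\chi(\mathbf{I}_N)=\mathbf{I}_{2N}$, so that $\dot{\mathbf{U}}$ is unitary iff $\chi(\dot{\mathbf{U}})$ is, and $\mathrm{rank}_{\mathbb{C}}\mathbf{A}_c=2\,\mathrm{rank}_{\mathbb{H}}\dot{\mathbf{A}}=2r$. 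I would also isolate the structural identity characterising the image of $\chi$: with $\mathbf{J}_n=\begin{pmatrix}\mathbf{0}&\mathbf{I}_n\\-\mathbf{I}_n&\mathbf{0}\end{pmatrix}$ one has $\mathbf{A}_c\mathbf{J}_N=\mathbf{J}_M\overline{\mathbf{A}_c}$, and conversely any complex matrix obeying this relation equals $\chi$ of a unique quaternion matrix. The theorem then reduces to the claim that $\mathbf{A}_c$ admits a complex SVD $\mathbf{A}_c=\mathbf{P}\mathbf{D}\mathbf{Q}^H$ in which $\mathbf{P}=\chi(\dot{\mathbf{U}})$, $\mathbf{Q}=\chi(\dot{\mathbf{V}})$ and $\mathbf{D}=\chi(\mathbf{\Lambda})$ with $\mathbf{\Lambda}$ of the stated form; applying $\chi^{-1}$ yields $\dot{\mathbf{A}}=\dot{\mathbf{U}}\mathbf{\Lambda}\dot{\mathbf{V}}^H$.

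To produce such a structured SVD I would work with the conjugate-linear maps $\tau_M(\mathbf{x})=\mathbf{J}_M\overline{\mathbf{x}}$ on $\mathbb{C}^{2M}$ and $\tau_N$ on $\mathbb{C}^{2N}$. Since $\mathbf{J}_M^2=-\mathbf{I}$ and $\mathbf{J}_M$ is real orthogonal, $\tau_M$ is an isometry with $\tau_M^2=-\mathrm{id}$ and $\langle\tau_M\mathbf{x},\mathbf{x}\rangle=0$ for all $\mathbf{x}$ (the last from antisymmetry of $\mathbf{J}_M$, via $\overline{\mathbf{x}^H\mathbf{J}_M\overline{\mathbf{x}}}=\mathbf{x}^T\mathbf{J}_M\mathbf{x}=0$). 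From $\mathbf{A}_c\mathbf{J}_N=\mathbf{J}_M\overline{\mathbf{A}_c}$ one derives $\overline{\mathbf{A}_c\mathbf{A}_c^H}=\mathbf{J}_M(\mathbf{A}_c\mathbf{A}_c^H)\mathbf{J}_M^{-1}$, hence $\tau_M$ commutes with $\mathbf{A}_c\mathbf{A}_c^H$ and preserves each of its eigenspaces, i.e. each left singular subspace of $\mathbf{A}_c$; similarly $\tau_N$ preserves the right singular subspaces, and $\mathbf{A}_c(\tau_N\mathbf{v})=\sigma\,\tau_M\mathbf{u}$ whenever $\mathbf{A}_c\mathbf{v}=\sigma\mathbf{u}$ because the $\sigma$'s are real. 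On each singular subspace the restriction of $\tau_M$ is again an isometric involution squaring to $-\mathrm{id}$, so that subspace is even-dimensional and admits an orthonormal basis of the form $\{\mathbf{u}_1,\tau_M\mathbf{u}_1,\mathbf{u}_2,\tau_M\mathbf{u}_2,\dots\}$. In particular every singular value of $\mathbf{A}_c$ has even multiplicity — including $0$, handled on the orthogonal complements of the column and row spaces of $\mathbf{A}_c$ — and since there are $2r$ nonzero ones they are $\sigma_1,\sigma_1,\dots,\sigma_r,\sigma_r$ with each $\sigma_i>0$. Collecting these paired bases, placing the $\tau$-partners in the lower block and ordering the singular values as $(\sigma_1,\dots,\sigma_r,0,\dots,0)$ repeated twice, gives unitaries $\mathbf{P},\mathbf{Q}$ with $\mathbf{P}\mathbf{J}_M=\mathbf{J}_M\overline{\mathbf{P}}$, $\mathbf{Q}\mathbf{J}_N=\mathbf{J}_N\overline{\mathbf{Q}}$ and a diagonal $\mathbf{D}=\chi(\mathbf{\Lambda})$; thus $\mathbf{P}=\chi(\dot{\mathbf{U}})$, $\mathbf{Q}=\chi(\dot{\mathbf{V}})$ with $\dot{\mathbf{U}},\dot{\mathbf{V}}$ unitary, as required.

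I expect the genuine work to be the middle step: checking that the paired orthonormal bases on the singular subspaces can be slotted into $\mathbf{P}$ and $\mathbf{Q}$ consistently — matching the left and right $\tau$-partners through $\mathbf{A}_c$ and getting the zero block right — so that the assembled unitaries literally satisfy the $\mathbf{J}$-intertwining relation, not merely exhibit the right singular values; the repeated-singular-value case is exactly where a naive choice of complex singular vectors fails. The rest (the algebra of $\chi$, the rank identity, the properties of $\tau$) is routine. As an alternative I would note the self-contained inductive route: apply the spectral theorem for Hermitian quaternion matrices to the positive semidefinite $\dot{\mathbf{A}}^H\dot{\mathbf{A}}=\dot{\mathbf{V}}\,\mathrm{diag}(\sigma_1^2,\dots,\sigma_r^2,0,\dots,0)\,\dot{\mathbf{V}}^H$ (its rank being $r$ since $\dot{\mathbf{A}}\dot{\mathbf{x}}=\mathbf{0}\iff\dot{\mathbf{A}}^H\dot{\mathbf{A}}\dot{\mathbf{x}}=\mathbf{0}$), set $\dot{\mathbf{U}}_1=\dot{\mathbf{A}}\dot{\mathbf{V}}_1\mathbf{\Sigma}_r^{-1}$ with $\dot{\mathbf{V}}_1$ the first $r$ columns of $\dot{\mathbf{V}}$, verify $\dot{\mathbf{U}}_1^H\dot{\mathbf{U}}_1=\mathbf{I}_r$ and $\dot{\mathbf{A}}\dot{\mathbf{V}}_2=\mathbf{0}$, extend $\dot{\mathbf{U}}_1$ to a unitary $\dot{\mathbf{U}}$ by Gram--Schmidt in $\mathbb{H}^M$, and compute $\dot{\mathbf{U}}^H\dot{\mathbf{A}}\dot{\mathbf{V}}$ blockwise to obtain $\mathbf{\Lambda}$; this merely relocates the difficulty onto proving that Hermitian quaternion matrices have real right eigenvalues and a complete orthonormal eigenbasis, which is again cleanest via the same adjoint $\chi$.
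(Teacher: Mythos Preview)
The paper does not actually prove this theorem: it is quoted from \cite{zhang1997quaternions} as a background result, followed only by the computational recipe linking the QSVD of $\dot{\mathbf{A}}$ to the ordinary SVD of the complex adjoint $\mathbf{A}_c$. There is no argument in the paper to compare your proposal against.

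Your proposal itself is correct and is essentially the standard proof (and close to Zhang's). You have identified the one genuinely delicate step: on each singular subspace the $\tau$-paired orthonormal bases on the left and right must be chosen \emph{compatibly} through $\mathbf{A}_c$ --- i.e.\ if $\mathbf{A}_c\mathbf{v}=\sigma\mathbf{u}$ then one must take $\tau_N\mathbf{v}$ as the partner of $\tau_M\mathbf{u}$ --- so that the assembled $\mathbf{P}$ and $\mathbf{Q}$ satisfy the $\mathbf{J}$-intertwining relations and diagonalise $\mathbf{A}_c$ simultaneously; a generic complex SVD will not do this when singular values repeat. Your alternative route via the quaternionic spectral theorem for $\dot{\mathbf{A}}^H\dot{\mathbf{A}}$ is equally valid and, once that spectral theorem is available, slightly more direct. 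Either argument would supply a full proof where the paper simply cites one.
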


The connection of the QSVD  of  $\dot{\textbf{A}}$  and the SVD of the isomorphic complex  matrix $\mathbf{A}_c\in\mathbb{C}^{2M \times 2N}$ ($\mathbf{A}_c=\mathbf{U}\acute{\mathbf{\Lambda}}\mathbf{V}^H$) is listed as follows \cite{DBLP:journals/tip/XuYXZN15}
\begin{equation}
\begin{cases}
\mathbf{\Lambda}=row_{odd}(col_{odd}(\acute{\mathbf{\Lambda}})),\\
\dot{\mathbf{U}}=col_{odd}(\mathbf{U}_1)+col_{odd}(-(\mathbf{U}_2)^*)\emph{j},\\
\dot{\mathbf{V}}=col_{odd}(\mathbf{V}_1)+col_{odd}(-(\mathbf{V}_2)^*)\emph{j},
\end{cases}
\end{equation}
then we have $\dot{\mathbf{A}}= \dot{\mathbf{U}}\mathbf{\Lambda}\dot{\mathbf{V}}^H$, where
$\mathbf{U}=\left( \begin{array}{c}
\mathbf{U}_1)_{M\times2M}  \\
(\mathbf{U}_2)_{M\times2M}\\
\end{array}
\right )$ and 
$\mathbf{V}=\left( \begin{array}{c}
\mathbf{V}_1)_{N\times2N}  \\
(\mathbf{V}_2)_{N\times2N}\\
\end{array}
\right )$.  The odd rows
and odd columns of $\star$ are represented by $row_{odd}(\star)$ and $col_{odd}(\star)$ respectively.

Utilizing the above property and theorem, the QSVD can be obtained by computing the classical SVD of the complex matrix $\textbf{A}_c$. Moreover, we can see that the number of positive singular values $r$ is the rank of  $\dot{\mathbf{A}}$ \cite{zhang1997quaternions}.  For a more detailed introduction of quaternion algebra, please refer to \cite{zhang1997quaternions, girard2007quaternions}. Following the Theorem \ref{theorem1}, we can get the definition of quaternion matrix nuclear norm (QNN). 

\begin{definition}(\textbf{QNN}) \cite{DBLP:journals/tip/ChenXZ20,DBLP:journals/ijon/YuZY19}:
	 	\label{def2}
	Given $\dot{\textbf{A}}\in\mathbb{H}^{M \times N}$, the nuclear norm of the quaternion matrix is $\parallel\dot{\textbf{A}}\parallel_*=\sum_{i=1}^{min(M,N)}\sigma_{i}(\dot{\textbf{A}})$, where $\sigma_{i}$ can be obtained by the QSVD of  $ \dot{\textbf{A}}$.
\end{definition}

\begin{theorem}(\textbf{Binary factorization framework} \cite{DBLP:journals/tsp/MiaoK20}):
	\label{theorem2}
	For any quaternion matrix $\dot{\textbf{X}}\in\mathbb{H}^{M \times N}$ with $rank(\dot{\textbf{X}})=r \leqslant d$. Then the binary factorization framework is devised as
	\begin{equation}
	\dot{\textbf{X}}=\dot{\textbf{U}}\dot{\textbf{V}}^H,
	\end{equation}
	where $\dot{\textbf{U}}\in\mathbb{H}^{M \times d}$
	and $\dot{\textbf{V}}\in\mathbb{H}^{N \times d}$ such that $rank(\dot{\textbf{U}})=rank(\dot{\textbf{V}})=r.$
\end{theorem}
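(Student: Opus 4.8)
The plan is to build the factorization explicitly from the QSVD of $\dot{\mathbf{X}}$ provided by Theorem~\ref{theorem1}, and then to pad the resulting thin factors with zero columns until the inner dimension equals the prescribed $d$. First I would apply Theorem~\ref{theorem1} to write $\dot{\mathbf{X}}=\dot{\mathbf{U}}_X\mathbf{\Lambda}\dot{\mathbf{V}}_X^H$ with $\dot{\mathbf{U}}_X\in\mathbb{H}^{M\times M}$ and $\dot{\mathbf{V}}_X\in\mathbb{H}^{N\times N}$ unitary and $\mathbf{\Lambda}$ carrying the block $\mathbf{\Sigma}_r=\mathrm{diag}(\sigma_1,\dots,\sigma_r)$ with every $\sigma_i>0$. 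Retaining only the first $r$ columns of $\dot{\mathbf{U}}_X$ and of $\dot{\mathbf{V}}_X$, denoted $\dot{\mathbf{U}}_r\in\mathbb{H}^{M\times r}$ and $\dot{\mathbf{V}}_r\in\mathbb{H}^{N\times r}$, the zero blocks of $\mathbf{\Lambda}$ disappear and one obtains the compact form $\dot{\mathbf{X}}=\dot{\mathbf{U}}_r\mathbf{\Sigma}_r\dot{\mathbf{V}}_r^H$.

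Next I would split the real, positive, diagonal factor symmetrically and set
\begin{equation}\nonumber
\dot{\mathbf{U}}=\bigl[\, \dot{\mathbf{U}}_r\mathbf{\Sigma}_r^{1/2}\ \ \mathbf{0}_{M\times(d-r)}\,\bigr]\in\mathbb{H}^{M\times d},\qquad \dot{\mathbf{V}}=\bigl[\, \dot{\mathbf{V}}_r\mathbf{\Sigma}_r^{1/2}\ \ \mathbf{0}_{N\times(d-r)}\,\bigr]\in\mathbb{H}^{N\times d}.
\end{equation}
A block computation using $(\dot{\mathbf{A}}\dot{\mathbf{B}})^H=\dot{\mathbf{B}}^H\dot{\mathbf{A}}^H$ and the fact that the real diagonal matrix $\mathbf{\Sigma}_r^{1/2}$ equals its own conjugate transpose then gives $\dot{\mathbf{U}}\dot{\mathbf{V}}^H=\dot{\mathbf{U}}_r\mathbf{\Sigma}_r^{1/2}\mathbf{\Sigma}_r^{1/2}\dot{\mathbf{V}}_r^H=\dot{\mathbf{U}}_r\mathbf{\Sigma}_r\dot{\mathbf{V}}_r^H=\dot{\mathbf{X}}$, and the $d-r$ appended zero columns are precisely what forces the inner dimension to be $d$.

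It remains to verify the rank condition. Since $\dot{\mathbf{U}}_X$ is unitary we have $\dot{\mathbf{U}}_r^H\dot{\mathbf{U}}_r=\mathbf{I}_r$, so if $\dot{\mathbf{U}}_r\dot{\mathbf{c}}=\mathbf{0}$ for a quaternion vector $\dot{\mathbf{c}}$ then $\dot{\mathbf{c}}=\dot{\mathbf{U}}_r^H\dot{\mathbf{U}}_r\dot{\mathbf{c}}=\mathbf{0}$; hence the $r$ columns of $\dot{\mathbf{U}}_r$ are right linearly independent and $\mathrm{rank}(\dot{\mathbf{U}}_r)=r$ in the sense of the quaternion rank. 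Right-multiplication by the invertible diagonal matrix $\mathbf{\Sigma}_r^{1/2}$ does not change the rank, and appending zero columns does not change the maximal number of right linearly independent columns, so $\mathrm{rank}(\dot{\mathbf{U}})=r$; the identical argument applied to $\dot{\mathbf{V}}_r$ yields $\mathrm{rank}(\dot{\mathbf{V}})=r$.

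I expect the only genuinely delicate point to be the rank bookkeeping under quaternion non-commutativity: one must make sure that the elementary facts used — that right-multiplication by an invertible quaternion matrix preserves rank, and that a column sub-block of a unitary quaternion matrix has right linearly independent columns — are stated for the \emph{right} notion of linear independence. These can be checked directly as above, or alternatively by passing to the isomorphic complex matrix of \eqref{ct} and using the identity $\mathrm{rank}_{\mathbb{C}}(\mathbf{A}_c)=2\,\mathrm{rank}_{\mathbb{H}}(\dot{\mathbf{A}})$. Apart from this, the argument is just the QSVD followed by a routine block multiplication.
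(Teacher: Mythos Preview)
Your argument is correct. The paper itself does not supply a proof of Theorem~\ref{theorem2}: it is quoted as a known result from \cite{DBLP:journals/tsp/MiaoK20}, so there is no in-paper proof to compare against. Your construction---take the compact QSVD $\dot{\mathbf{X}}=\dot{\mathbf{U}}_r\mathbf{\Sigma}_r\dot{\mathbf{V}}_r^H$, split $\mathbf{\Sigma}_r$ as $\mathbf{\Sigma}_r^{1/2}\mathbf{\Sigma}_r^{1/2}$, and pad each factor with $d-r$ zero columns---is exactly the standard route and is essentially what the paper implicitly relies on later (cf.\ the end of the proof of Theorem~\ref{theorem3}, where the optimal factors are taken to be $\dot{\mathbf{U}}_\star=\dot{\mathbf{A}}_{\dot{\mathbf{X}}}\mathbf{\Sigma}_{\dot{\mathbf{X}}}^{1/2}$ and $\dot{\mathbf{V}}_\star=\dot{\mathbf{B}}_{\dot{\mathbf{X}}}\mathbf{\Sigma}_{\dot{\mathbf{X}}}^{1/2}$). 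Your care with right linear independence and the fallback via the complex adjoint $\mathbf{A}_c$ is appropriate; nothing further is needed.
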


\begin{definition}(\textbf{Q-DFN} \cite{DBLP:journals/tsp/MiaoK20}):
	\label{def3}	
Based on theorem \ref{theorem2}, the quaternion double Frobenius norm is defined as	
	\begin{equation}
\parallel\dot{\textbf{X}}\parallel_{Q-DFN}:=\min\limits_{\begin{array}{c}
	\dot{\textbf{U}}, \dot{\textbf{V}} \\ \dot{\textbf{X}}=\dot{\textbf{U}}\dot{\textbf{V}}^H\end{array}} \frac{1}{2}\parallel\dot{\textbf{U}}\parallel_F^2+\frac{1}{2}\parallel\dot{\textbf{V}}\parallel_F^2.
\end{equation}	
\end{definition}
Essentially, whether it is based on matrix or quaternion matrix, the double Frobenius norm is the nuclear norm \cite{DBLP:journals/pami/ShangCLLL18, DBLP:journals/tsp/MiaoK20, DBLP:conf/iccv/CabralTCB13}, \textit{i.e.}, $\parallel\dot{\textbf{X}}\parallel_{Q-DFN}=\parallel\dot{\mathbf{X}}\parallel_{*}.$

\section{Quaternion-based model for image completion}\label{sec:2}
In this section, we first review the quaternion-based completion related works, and then  introduce our quaternion-based low rank completion strategies.

\subsection{Related work and low rank quaternion matrix completion }\label{sec:2.1}
In the related optimized field of completion,  a new surrogate, logarithmic norm was devised \cite{DBLP:journals/tip/ChenJLZ21}, and was utilized in factorized framework to get the recovery.  The  logarithmic norm of the given matrix $\mathbf{X}\in\mathbb{R}^{M \times N}$ is defined as
\begin{equation}
\parallel\textbf{X}\parallel_L^p=\sum_{i=1}^{min(M,N)}\log (\sigma_{i}^P(\textbf{X})+\epsilon),
\end{equation}
where $\sigma_{i}$ can be obtained by computing SVD of  $ \textbf{X}$. The main advantage of using the logarithmic norm is that it has a superior sparse substitution of rank over the convex envelope of the nuclear norm \cite{DBLP:journals/tip/ChenJLZ21, fazel2003log}. For example, when $\textbf{X}=x\in\mathbb{R}$, $rank(x)= 0$ if  $x = 0$ and $rank(x)= 1$, otherwise. Supposing $\mid x \mid \leq M$, $\frac{\parallel x \parallel_*}{M} = \frac{\mid x \mid }{M}$ is the convex envelop of $rank(x)= 0$ on $\{x\lvert \mid x \mid \leq M\}$ \cite{fazel2003log, fazel2001rank}. The intuitive explanation behind the idea of using logarithmic function to approximate rank can be found in Figure  \ref{p1}. When $\epsilon$ is a small positive constant,  logarithmic norm would be closer to $rank(x)$ than the convex envelop \cite{DBLP:journals/tip/ChenJLZ21}.
\begin{figure}%[htbp]
	\centering
	\includegraphics[width=80mm]{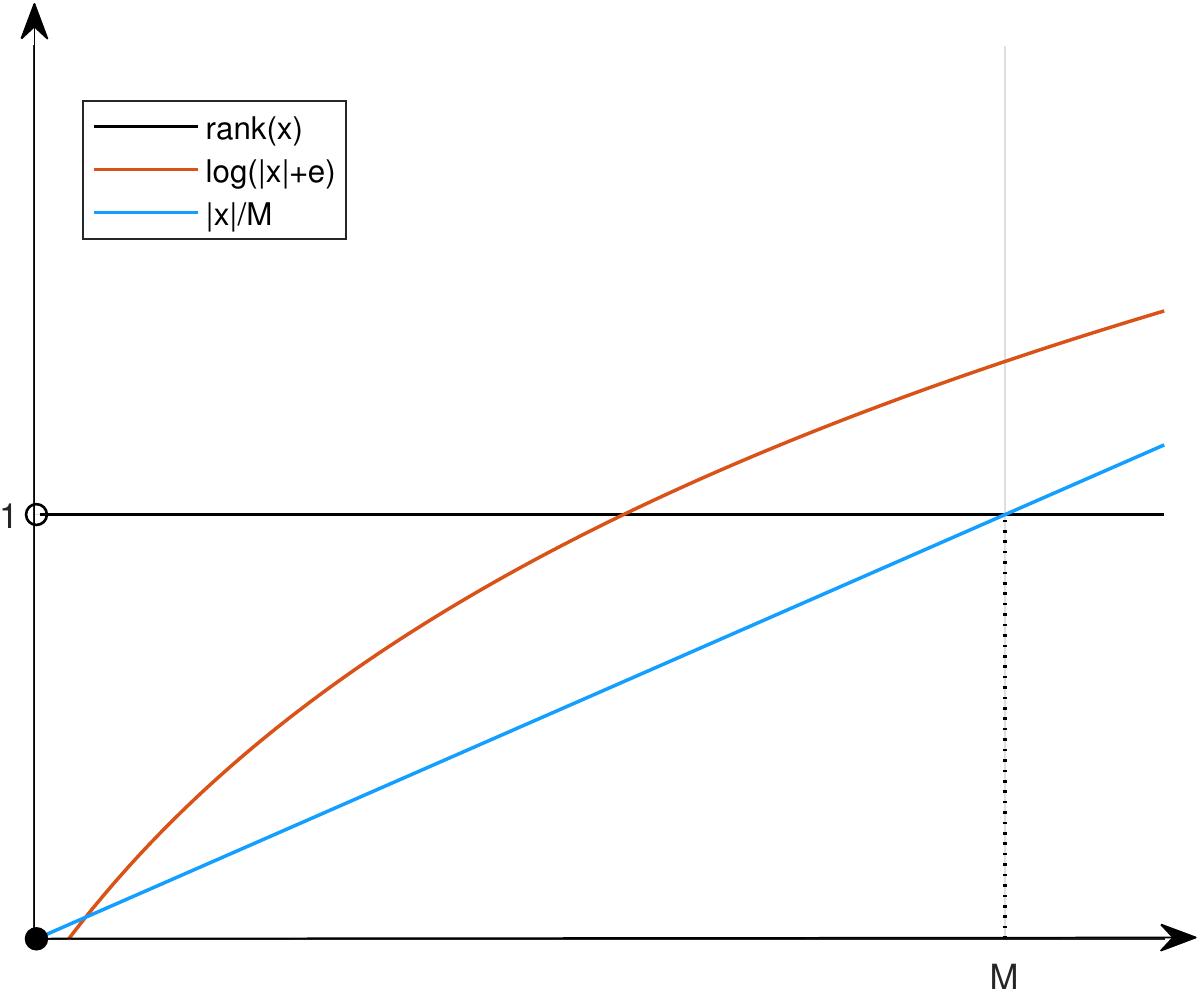}
	\caption{The rank, the convex envelop of rank (nuclear norm), and the logarithmic norm for scalar x}
	\label{p1}
\end{figure}

The classic LRMC model that  fundamentally optimizes two-dimensional data like grayscale images completion is briefly introduced in (\ref{m2}). Therefore, when processing color images,  the model in (\ref{m2}) needs to disassemble RGB channels, while LRQMC model can assemble RGB channels. As an extended model from LRMC, the model of LRQMC can be represented as \cite{DBLP:journals/tip/ChenXZ20}
\begin{equation}
\label{m5}
\min\limits_{\dot{\mathbf{X}}} \text{rank} (\dot{\mathbf{X}}),  \qquad  \text{ s.t.}   \quad   P_\Omega(\dot{\mathbf{X}}-\dot{\mathbf{M}})=0,
\end{equation}
where $\dot{\mathbf{M}}$ is the given partial observed quaternion matrix (missing color image).  $\dot{\mathbf{X}}\in \mathbb{H}^{M\times N}$ is the recovery quaternion  matrix (recovered color image). The linear operation $P_\Omega(\cdot)$  indicating the elements in $\Omega$ are presented as 1 for the observed changeless entries and 0 for missing entries.

The main completion models in the quaternion domain can  be divided into two lines: low rank minimization and low rank factorization.  Similar to LRMC, the rank function is hard to be optimized in model (\ref{m5}). Hence, for the first line, based on Definition \ref{def2},  the model of low rank minimization can be written as
\begin{equation}\label{m6}
\min\limits_{\dot{\mathbf{X}}} \parallel\dot{\mathbf{X}}\parallel_*,  \qquad  \text{ s.t.}   \quad   P_\Omega(\dot{\mathbf{X}}-\dot{\mathbf{M}})=0.
\end{equation}
Almost all optimizations  based on (\ref{m6}) are aiming to improve QNN, like a more general Schattern p-norm ($\parallel\dot{\mathbf{X}}\parallel_{Q-S_p}=(\sum_{i=1}^{min(M,N)}\sigma_{i}^{p}(\dot{\textbf{X}}))^{\frac{1}{p}}, 0<p<\infty$) \cite{DBLP:journals/tip/ChenXZ20, DBLP:journals/ijon/YuZY19}, and a Log-determinant norm ($L(\dot{\mathbf{X}},\epsilon)=\sum_{i=r+1}^{\min(m,n)}\log(\sigma_{i}(\dot{\mathbf{X}})+\epsilon)$ with $\epsilon>0$ ) \cite{DBLP:journals/tip/ChenXZ20}. When we let $p=1$, the Schattern p-norm is the frequently used nuclear norm. The mentioned methods  have proved their effectiveness theoretically and experimentally in processing color image completion. 

In addition to the methods listed above, the low rank factorization has been also found to be an important line for completion. Based on the Theorem \ref{theorem2} and Definition \ref{def3}, the model of low rank factorization can be written as \cite{DBLP:journals/tsp/MiaoK20}
\begin{equation}\label{m7}
\min\limits_{\dot{\textbf{X}}}\parallel\dot{\textbf{X}}\parallel_{Q-DFN},  \qquad  \text{ s.t.}   \quad   P_\Omega(\dot{\mathbf{X}}-\dot{\mathbf{M}})=0.
\end{equation}
After inducing a nonnegative parameter $\lambda$ to balance  the loss function and the low rank regularization, (\ref{m7}) can be further represented as
\begin{equation}
\begin{aligned}
&\min\limits_{
	\dot{\textbf{U}}, \dot{\textbf{V}} ,\dot{\textbf{X}}} \frac{\lambda}{2}(\parallel\dot{\textbf{U}}\parallel_F^2+\parallel\dot{\textbf{V}}\parallel_F^2)+\frac{1}{2}\parallel P_\Omega(\dot{\mathbf{X}}-\dot{\mathbf{M}})\parallel_F \\&
	s.t. \dot{\textbf{X}}=\dot{\textbf{U}}\dot{\textbf{V}}^H.
\end{aligned}
\end{equation}

In this way, the goal is shifted from calculating the entire QSVD of  large-scale $\dot{\textbf{X}}$ to optimizing bi-factor quaternion matrices that have smaller dimensions.  

\subsection{The proposed  Quaternion-based  Logarithmic Norm Factorization (QLNF) }
 As described above, because of the geometric structure of the quaternion itself, the processing of color images in the quaternion domain can maintain the connection between channels as much as possible. In this section, enlightened by the definition of the logarithmic norm in \cite{DBLP:journals/tip/ChenJLZ21}, we define the quaternion-based logarithmic norm. Then, we establish the  equivalence relation between quaternion-based Logarithmic norm and bi-factor quaternion matrix factorization. Eventually, the optimization process is given.
 \begin{definition}(\textbf{QLN}):
 	\label{def4}
 	Given $\dot{\textbf{X}}\in\mathbb{H}^{M \times N}$, the logarithmic norm of the quaternion matrix with $0\leq p \leq1$ and $\epsilon>0$ is
 	\begin{equation}
 \parallel\dot{\textbf{X}}\parallel_L^p=\sum_{i=1}^{min(M,N)}\log(\sigma_{i}^p(\dot{\mathbf{X}})+\epsilon),
 	\end{equation}
 where $\sigma_{i}$  can be obtained by the QSVD of  $ \dot{\textbf{X}}$.
 \end{definition}

The rank is a real number,  which is analogous to the situation in the real domain. Learning from Figure \ref{p1}, the QLN  is a more precise approximation than QNN. Moreover, inspired by the bi-factor surrogate theorem for matrix logarithmic norm in \cite{DBLP:journals/tip/ChenJLZ21}, and Theorem \ref{theorem2}, we establish the bi-factor surrogate theorem for QLN.

\begin{theorem} :
	\label{theorem3}
	For any quaternion matrix $\dot{\textbf{X}}\in\mathbb{H}^{M \times N}$ with $rank(\dot{\textbf{X}})=r \leqslant d \leq\{M, N\}$. There exist $\dot{\textbf{U}}\in\mathbb{H}^{M \times d}$
	and $\dot{\textbf{V}}\in\mathbb{H}^{N \times d}$ such that 
$\dot{\textbf{X}}=\dot{\textbf{U}}\dot{\textbf{V}}^H$. Then we have
	\begin{equation}
\parallel\dot{\textbf{X}}\parallel_{L}^{{1}/{2}}:=\mathop{\min_{
		\dot{\textbf{U}}, \dot{\textbf{V}}      \atop \dot{\textbf{X}}=\dot{\textbf{U}}\dot{\textbf{V}}^H}} \frac{1}{2}\parallel\dot{\textbf{U}}\parallel_L^1+\frac{1}{2}\parallel\dot{\textbf{V}}\parallel_L^1.
\end{equation}	
\end{theorem}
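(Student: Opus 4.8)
I would establish the identity by a two-sided estimate: first exhibit one factorization whose objective value equals $\|\dot{\mathbf{X}}\|_L^{1/2}$ (so the minimum is at most that), and then show that \emph{every} admissible factorization $\dot{\mathbf{X}}=\dot{\mathbf{U}}\dot{\mathbf{V}}^H$ has objective value at least $\|\dot{\mathbf{X}}\|_L^{1/2}$. Whenever a genuinely matrix-analytic inequality is needed, I would pass to the complex adjoint $\mathbf{A}_c\in\mathbb{C}^{2M\times 2N}$, using that $\dot{\mathbf{A}}\mapsto\mathbf{A}_c$ is an injective ring homomorphism and that, by the singular-value correspondence recalled right after Theorem~\ref{theorem1}, the singular values of $\dot{\mathbf{A}}$ are exactly those of $\mathbf{A}_c$ (each with doubled multiplicity); this reduces quaternionic singular-value inequalities to their classical complex counterparts.

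\emph{Upper bound.} Using QSVD (Theorem~\ref{theorem1}), write $\dot{\mathbf{X}}=\dot{\mathbf{U}}_0\mathbf{\Lambda}\dot{\mathbf{V}}_0^H$ with $\dot{\mathbf{U}}_0,\dot{\mathbf{V}}_0$ unitary and $\mathbf{\Lambda}$ carrying $\sigma_1,\dots,\sigma_r>0$. Let $\dot{\mathbf{U}}_1,\dot{\mathbf{V}}_1$ consist of the first $d$ columns of $\dot{\mathbf{U}}_0,\dot{\mathbf{V}}_0$ and $\mathbf{\Sigma}_d=\operatorname{diag}(\sigma_1,\dots,\sigma_r,0,\dots,0)\in\mathbb{R}^{d\times d}$, so $\dot{\mathbf{X}}=\dot{\mathbf{U}}_1\mathbf{\Sigma}_d\dot{\mathbf{V}}_1^H$. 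Put $\dot{\mathbf{U}}:=\dot{\mathbf{U}}_1\mathbf{\Sigma}_d^{1/2}$ and $\dot{\mathbf{V}}:=\dot{\mathbf{V}}_1\mathbf{\Sigma}_d^{1/2}$; then $\dot{\mathbf{U}}\dot{\mathbf{V}}^H=\dot{\mathbf{X}}$, and since $\dot{\mathbf{U}}_1,\dot{\mathbf{V}}_1$ have orthonormal columns the singular values of both $\dot{\mathbf{U}}$ and $\dot{\mathbf{V}}$ are exactly $\sigma_1^{1/2},\dots,\sigma_r^{1/2}$ (the rest zero). Substituting into $\tfrac12\|\dot{\mathbf{U}}\|_L^1+\tfrac12\|\dot{\mathbf{V}}\|_L^1$ recovers $\sum_i\log(\sigma_i^{1/2}(\dot{\mathbf{X}})+\epsilon)=\|\dot{\mathbf{X}}\|_L^{1/2}$; the only care needed is to account for the $\log\epsilon$ contributions of the zero singular values so that the number of terms matches that in $\|\dot{\mathbf{X}}\|_L^{1/2}$.

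\emph{Lower bound.} Fix an arbitrary factorization $\dot{\mathbf{X}}=\dot{\mathbf{U}}\dot{\mathbf{V}}^H$. First, for each $i$, $(\sigma_i(\dot{\mathbf{U}})+\epsilon)(\sigma_i(\dot{\mathbf{V}})+\epsilon)=\sigma_i(\dot{\mathbf{U}})\sigma_i(\dot{\mathbf{V}})+\epsilon(\sigma_i(\dot{\mathbf{U}})+\sigma_i(\dot{\mathbf{V}}))+\epsilon^2\geq\big(\sqrt{\sigma_i(\dot{\mathbf{U}})\sigma_i(\dot{\mathbf{V}})}+\epsilon\big)^2$ by AM--GM, so monotonicity of $\log$ yields $\tfrac12\|\dot{\mathbf{U}}\|_L^1+\tfrac12\|\dot{\mathbf{V}}\|_L^1\geq\sum_i\log\big(\sqrt{\sigma_i(\dot{\mathbf{U}})\sigma_i(\dot{\mathbf{V}})}+\epsilon\big)$. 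Second, Horn's inequality $\prod_{i=1}^k\sigma_i(\dot{\mathbf{U}}\dot{\mathbf{V}}^H)\leq\prod_{i=1}^k\sigma_i(\dot{\mathbf{U}})\sigma_i(\dot{\mathbf{V}})$ (for all $k$), transferred from the complex adjoints, says that $(\log\sigma_i(\dot{\mathbf{X}}))_i$ is weakly majorized by $(\log[\sigma_i(\dot{\mathbf{U}})\sigma_i(\dot{\mathbf{V}})])_i$; since $s\mapsto\log(e^{s/2}+\epsilon)$ is increasing and convex, the Hardy--Littlewood--P\'olya inequality gives $\sum_i\log(\sigma_i^{1/2}(\dot{\mathbf{X}})+\epsilon)\leq\sum_i\log\big(\sqrt{\sigma_i(\dot{\mathbf{U}})\sigma_i(\dot{\mathbf{V}})}+\epsilon\big)$. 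Chaining the two inequalities gives $\tfrac12\|\dot{\mathbf{U}}\|_L^1+\tfrac12\|\dot{\mathbf{V}}\|_L^1\geq\|\dot{\mathbf{X}}\|_L^{1/2}$; taking the infimum over factorizations and combining with the upper bound completes the argument.

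\emph{Main obstacle.} The crux is the second half of the lower bound. One cannot simply apply a majorization inequality to the singular values, because $t\mapsto\log(\sqrt{t}+\epsilon)$ is concave, not convex, in $t$; the resolution is to work with the \emph{logarithms} of the singular values, where $s\mapsto\log(e^{s/2}+\epsilon)$ becomes increasing and convex, and to invoke the \emph{multiplicative} (Horn) singular-value inequality rather than the additive (nuclear-norm) one. The remaining subtlety is purely bookkeeping: transferring Horn's inequality to the quaternion setting via the complex adjoint, and matching the $\log\epsilon$ constants coming from zero singular values on both sides (which is where the hypothesis $r\le d\le\min\{M,N\}$ enters).
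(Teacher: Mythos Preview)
Your proposal is correct and follows essentially the same route as the paper: the upper bound via the QSVD-based factorization $\dot{\mathbf{U}}=\dot{\mathbf{U}}_1\mathbf{\Sigma}_d^{1/2}$, $\dot{\mathbf{V}}=\dot{\mathbf{V}}_1\mathbf{\Sigma}_d^{1/2}$, and the lower bound via Horn's multiplicative inequality (transferred through the complex adjoint), the convexity of $s\mapsto\log(e^{s/2}+\epsilon)$ combined with weak log-majorization, and AM--GM on the $\epsilon$-terms. The only cosmetic difference is that the paper applies the majorization step before AM--GM while you apply AM--GM first; both orderings work, and your identification of the ``main obstacle'' (concavity in $t$ forcing one to pass to logarithms and use the multiplicative Horn inequality) is exactly the point the paper's Lemma~3 is designed to address.
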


The proof of Theorem  \ref{theorem3} can be found in the Appendix.

Based on Definition \ref{def4} and Theorem  \ref{theorem3}, we present the following completion model with the quaternion-based framework
\begin{equation}\label{m8}
\min\limits_{\dot{\textbf{X}}}\parallel\dot{\textbf{X}}\parallel_{L}^{1/2},  \qquad  \text{ s.t.}   \quad   P_\Omega(\dot{\mathbf{X}}-\dot{\mathbf{M}})=0,
\end{equation}
Integrating the above questions into one formula,  (\ref{m8}) can be further written as
\begin{equation}
\min\limits_{
	\dot{\textbf{U}}, \dot{\textbf{V}} ,\dot{\textbf{X}}} \frac{\lambda}{2}(\parallel\dot{\textbf{U}}\parallel_L^1+\parallel\dot{\textbf{V}}\parallel_L^1)+\parallel \mathbf{W}\odot (\dot{\textbf{U}}\dot{\textbf{V}}^H-\dot{\mathbf{M}})\parallel_F^2 ,
\end{equation}
where $\mathbf{W} \in \mathbb{R}^{M \times N}$  denotes the location of the existing and missing entries, indicating by 1 and 0. $\odot$ is the Hadamard product. When the multiplication is acted on the real matrix $\mathbf{W}$  and the quaternion matrix $(\dot{\textbf{U}}\dot{\textbf{V}}^H-\dot{\mathbf{M}})\in \mathbb{Q}^{M \times N})$, $\odot$ still means element-wise multiplication. $\lambda$ is a nonnegative parameter.

As in matrix-based situation,  this model is  full of challenges to be optimized. Inspired by the FISTA designed in the real domain \cite{DBLP:journals/jscic/XuY17}, we induce the FISTA to update under the alternating minimization framework. It means that at each iteration, we  update factors in turn with other factors fixed.  Assuming $\dot{\textbf{U}}_{k+1}$ and $\dot{\textbf{V}}_{k+1}$   is the updating results in the k\textit{th} iteration. Specific steps are listed as follows 

\textbf{Updating $\dot{\textbf{U}}$ }:
\begin{equation}
\dot{\textbf{U}}_{k+1}= \mathop{\arg\min}\limits_{
	\dot{\textbf{U}}} \frac{\lambda}{2}\parallel\dot{\textbf{U}}\parallel_L^1+\parallel \mathbf{W}\odot (\dot{\textbf{U}}\dot{\textbf{V}}_k^H-\dot{\mathbf{M}})\parallel_F ^2.
\end{equation}
Let  $\mathcal{Q}(\dot{\textbf{U}}) = \parallel \mathbf{W}\odot (\dot{\textbf{U}}\dot{\textbf{V}}_k^H-\dot{\mathbf{M}})\parallel_F ^2$. Analogous to the ADMM framework used in the complex and the quaternion domain \cite{DBLP:journals/tsp/MiaoK20, li2015alternating} and based on FISTA,  the updating of  $\dot{\textbf{U}}_{k+1}$ can be written as:\\
\begin{equation}
\begin{cases}
 \hat{\dot{\textbf{U}}}_{k}=\dot{\textbf{U}}_{k}+\omega_k(\dot{\textbf{U}}_{k}-\dot{\textbf{U}}_{k-1})\\
 \dot{\textbf{U}}_{k+1}=\mathop{\arg\min}\limits_{\dot{\textbf{U}}} \mathfrak{R}(<\nabla\mathcal{Q}(\hat{\dot{\textbf{U}}}_{k}),  \dot{\textbf{U}} - \hat{\dot{\textbf{U}}}_{k}> )+ \frac{\mu_k}{2}\parallel\dot{\textbf{U}} - \hat{\dot{\textbf{U}}}_{k}\parallel_F^2+
 \frac{\lambda}{2}\parallel\dot{\textbf{U}}\parallel_L^1,
 \end{cases}
\end{equation}
where $t_{k}=\dfrac{1+\sqrt{1+4t_{k-1}^2}}{2}$ and $\omega_{k}=\frac{t_{k-1}-1}{t_{k}}$. Further, we can get 
\begin{equation}\label{m9}
\begin{cases}
\hat{\dot{\textbf{U}}}_{k}=\dot{\textbf{U}}_{k}+\omega_k(\dot{\textbf{U}}_{k}-\dot{\textbf{U}}_{k-1}).\\
\dot{\textbf{U}}_{k+1}=\mathop{\arg\min}\limits_{\dot{\textbf{U}}} \frac{\lambda}{2}\parallel\dot{\textbf{U}}\parallel_L^1+ \frac{\mu_k}{2}\parallel \dot{\textbf{U}} - \hat{\dot{\textbf{U}}}_{k}+ \frac{1}{\mu_k}\nabla\mathcal{Q}(\hat{\dot{\textbf{U}}}_{k})\parallel_F^2
\end{cases}
\end{equation}
It should be noted that the left and right generalized HR (GHR) of  the real functions $\mathcal{Q}(\dot{\textbf{U}})$ with quaternion variables are the same \cite{DBLP:journals/tsp/XuM15}. Consequently, according to the derivation theories for quaternion matrix in \cite{DBLP:journals/tsp/XuM15}, the gradient of  $\mathcal{Q}(\dot{\textbf{U}})$ is computed as following
\begin{equation}
\begin{aligned}
\frac{\partial\mathcal{Q}(\dot{\textbf{U}})}{\partial \dot{\textbf{U}}^*}&=
\frac{\partial Tr\{(\mathbf{W}\odot (\dot{\textbf{U}}\dot{\textbf{V}}_k^H-\dot{\mathbf{M}}))^H(\mathbf{W}\odot (\dot{\textbf{U}}\dot{\textbf{V}}_k^H-\dot{\mathbf{M}}))\}}{\partial \dot{\textbf{U}}^*}\\&
=\mathbf{W}\odot(\mathfrak{R}(\dot{\textbf{U}}\dot{\textbf{V}}_k^H)\dot{\textbf{V}}_k-\frac{1}{2}(\dot{\textbf{V}}_k\dot{\textbf{U}}^{*H})^T\dot{\textbf{V}}_k-\mathfrak{R}(\dot{\mathbf{M}})\dot{\textbf{V}}_k+\frac{1}{2}\dot{\mathbf{M}}^*\dot{\textbf{V}}_k)\\&
=\mathbf{W}\odot(\frac{1}{2}\dot{\textbf{U}}\dot{\textbf{V}}_k^H\dot{\textbf{V}}_k - \frac{1}{2}\dot{\mathbf{M}}\dot{\textbf{V}}_k).
\end{aligned}
\end{equation}
Then, (\ref{m9}) can be rewritten as
\begin{equation}\label{m10}
\begin{cases}
\hat{\dot{\textbf{U}}}_{k}=\dot{\textbf{U}}_{k}+\omega_k(\dot{\textbf{U}}_{k}-\dot{\textbf{U}}_{k-1}).\\
\dot{\textbf{U}}_{k+1}=\mathop{\arg\min}\limits_{\dot{\textbf{U}}} \frac{\lambda}{2}\parallel\dot{\textbf{U}}\parallel_L^1+ \frac{\mu_k}{2}\parallel \dot{\textbf{U}} - \hat{\dot{\textbf{U}}}_{k}+ \frac{1}{2\mu_k}\mathbf{W}\odot(\hat{\dot{\textbf{U}}}_{k}\dot{\textbf{V}}_k^H - \dot{\mathbf{M}})\dot{\textbf{V}}_k)\parallel_F^2
\end{cases}
\end{equation}
The next problem is how to obtain the optimal of  (\ref{m10}). To slove this problem, we establish the following theorem to get an approximate expression.
\begin{theorem} \textbf{(Quaternion logarithmic singular value thresholding	(QLSVT))} :
	\label{theorem4}
	For any quaternion matrix $\lambda>0$ and $\dot{\textbf{Y}}\in\mathbb{H}^{M \times N}$, the QSVD of  $\dot{\textbf{Y}}$ is $\dot{\textbf{Y}}=\dot{\mathbf{U}}_{\dot{\textbf{Y}}}(\mathbf{\Lambda}_{\dot{\textbf{Y}}})\dot{\mathbf{V}}_{\dot{\textbf{Y}}}^H$. Then the closed solution of the problem
		\begin{equation}
		\mathop{\arg\min}\limits_{\dot{\textbf{X}}} \frac{1}{2}\parallel \dot{\textbf{Y}} -\dot{\textbf{X}}\parallel_F^2+\lambda\parallel\dot{\textbf{X}}\parallel_L^1 
		\end{equation}	
		is provided by $\dot{\textbf{X}}=\dot{\mathbf{U}}_{\dot{\textbf{Y}}}\mathcal{L}_{\lambda, \epsilon}(\mathbf{\Lambda}_{\dot{\textbf{Y}}})\dot{\mathbf{V}}_{\dot{\textbf{Y}}}^H$.
	Beacuse $\mathbf{\Lambda}_{\dot{\textbf{Y}}}$  consists of real numbers, the form of  $\mathcal{L}_{\lambda, \epsilon}(\cdot)$ is similar to the real case \cite{DBLP:journals/tip/ChenJLZ21}. The soft thresholding operator $\mathcal{L}_{\lambda, \epsilon}(\cdot)$ is defined as
	\begin{equation}\label{tm1}
	\mathcal{L}_{\lambda, \epsilon}(x) : =
	\begin{cases}
	0,\qquad \qquad \qquad  \qquad  \qquad \Delta \leq 0\\
\mathop{\arg\min}\limits_{a\in \{0, \frac{1}{2}(x-\epsilon+\sqrt{\Delta})\}} \mathit{h}(a), \qquad\Delta > 0
	\end{cases}
	\end{equation}	
where $\Delta = (x-\epsilon)^2-4(\lambda-x\epsilon)$ and function $\mathit{h}(a) :=\frac{1}{2}(a-x)^2 + \lambda\log(a+\epsilon)$ is  $\mathbb{R}^+\longrightarrow\mathbb{R}^+$.
\end{theorem}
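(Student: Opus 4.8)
The plan is to reduce the quaternion problem to a real (scalar) optimization by exploiting the rotational invariance of the Frobenius norm and of $\|\cdot\|_L^1$ under multiplication by unitary quaternion matrices, exactly as von Neumann's trace inequality does in the classical singular value thresholding argument. First I would write the QSVD $\dot{\textbf{Y}}=\dot{\mathbf{U}}_{\dot{\textbf{Y}}}\mathbf{\Lambda}_{\dot{\textbf{Y}}}\dot{\mathbf{V}}_{\dot{\textbf{Y}}}^H$ and observe that $\|\dot{\textbf{Y}}-\dot{\textbf{X}}\|_F^2=\|\mathbf{\Lambda}_{\dot{\textbf{Y}}}-\dot{\mathbf{U}}_{\dot{\textbf{Y}}}^H\dot{\textbf{X}}\dot{\mathbf{V}}_{\dot{\textbf{Y}}}\|_F^2$ and $\|\dot{\textbf{X}}\|_L^1=\|\dot{\mathbf{U}}_{\dot{\textbf{Y}}}^H\dot{\textbf{X}}\dot{\mathbf{V}}_{\dot{\textbf{Y}}}\|_L^1$, since the singular values of a quaternion matrix are unchanged under left/right multiplication by unitary quaternion matrices (this follows from Theorem~\ref{theorem1}). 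Hence, writing $\dot{\textbf{Z}}=\dot{\mathbf{U}}_{\dot{\textbf{Y}}}^H\dot{\textbf{X}}\dot{\mathbf{V}}_{\dot{\textbf{Y}}}$, the problem becomes $\min_{\dot{\textbf{Z}}}\tfrac12\|\mathbf{\Lambda}_{\dot{\textbf{Y}}}-\dot{\textbf{Z}}\|_F^2+\lambda\|\dot{\textbf{Z}}\|_L^1$, which has the same minimizers up to the fixed change of variables.

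Next I would show that the optimal $\dot{\textbf{Z}}$ is diagonal with nonnegative real entries. For this I invoke a quaternion analogue of von Neumann's trace inequality: $\mathfrak{R}\,\langle \mathbf{\Lambda}_{\dot{\textbf{Y}}},\dot{\textbf{Z}}\rangle \le \sum_i \sigma_i(\mathbf{\Lambda}_{\dot{\textbf{Y}}})\,\sigma_i(\dot{\textbf{Z}})$, so that $\|\mathbf{\Lambda}_{\dot{\textbf{Y}}}-\dot{\textbf{Z}}\|_F^2 \ge \sum_i(\sigma_i(\dot{\textbf{Y}})-\sigma_i(\dot{\textbf{Z}}))^2$; combined with the fact that $\|\dot{\textbf{Z}}\|_L^1$ depends only on the $\sigma_i(\dot{\textbf{Z}})$, the objective is bounded below by the corresponding objective in the singular values alone, with equality when $\dot{\textbf{Z}}$ is the diagonal matrix carrying those singular values aligned with $\mathbf{\Lambda}_{\dot{\textbf{Y}}}$. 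This decouples the problem into $\min(M,N)$ independent scalar problems $\min_{a\ge 0}\tfrac12(a-\sigma_i(\dot{\textbf{Y}}))^2+\lambda\log(a+\epsilon)$, i.e. $\min_{a}h(a)$ in the notation of the statement.

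Then I would solve the scalar problem $\min_{a\ge 0}h(a)$ by elementary calculus: setting $h'(a)=(a-x)+\lambda/(a+\epsilon)=0$ gives the quadratic $(a+\epsilon)(a-x)+\lambda=0$, i.e. $a^2+(\epsilon-x)a+(\lambda-x\epsilon)=0$, whose discriminant is exactly $\Delta=(x-\epsilon)^2-4(\lambda-x\epsilon)$. If $\Delta\le 0$ then $h'$ has no positive root and $h$ is nondecreasing on $\mathbb{R}^+$, so the minimizer is $a=0$; if $\Delta>0$ the only candidate stationary point in $\mathbb{R}^+$ is $a=\tfrac12(x-\epsilon+\sqrt{\Delta})$, and comparing $h$ there against the boundary value $h(0)$ yields the stated $\arg\min$ over $\{0,\tfrac12(x-\epsilon+\sqrt{\Delta})\}$. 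Reassembling, the minimizer of the matrix problem is $\dot{\textbf{Z}}=\mathcal{L}_{\lambda,\epsilon}(\mathbf{\Lambda}_{\dot{\textbf{Y}}})$ and hence $\dot{\textbf{X}}=\dot{\mathbf{U}}_{\dot{\textbf{Y}}}\mathcal{L}_{\lambda,\epsilon}(\mathbf{\Lambda}_{\dot{\textbf{Y}}})\dot{\mathbf{V}}_{\dot{\textbf{Y}}}^H$.

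The main obstacle is the quaternion von Neumann trace inequality: noncommutativity means one cannot directly quote the real or complex proof, so I would establish it either via the isomorphic complex representation $\mathbf{A}_c\in\mathbb{C}^{2M\times2N}$ of~(\ref{ct}) — noting that the complex singular values of $\mathbf{A}_c$ are the quaternion singular values of $\dot{\textbf{A}}$ each doubled, and that $\mathfrak{R}\,\mathrm{tr}(\dot{\textbf{A}}^H\dot{\textbf{B}})=\tfrac12\mathrm{Re}\,\mathrm{tr}(\mathbf{A}_c^H\mathbf{B}_c)$ — thereby reducing to the complex trace inequality, or by a direct majorization argument on quaternion singular values. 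A secondary technical point is justifying that $h$ maps $\mathbb{R}^+$ into $\mathbb{R}^+$ and that restricting to $a\ge0$ (rather than all real $a$) is legitimate, which follows because the optimal singular values are nonnegative by construction and $\log(a+\epsilon)$ is only the intended penalty on $[0,\infty)$.
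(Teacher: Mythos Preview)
Your proposal is correct and follows essentially the same route as the paper: both arguments use a von Neumann--type trace inequality to reduce the quaternion matrix problem to independent scalar minimizations over the singular values, then analyze $h'(a)=0$ via the quadratic and its discriminant $\Delta$ to obtain the thresholding rule. Your treatment is slightly more careful in that you explicitly propose deriving the quaternion trace inequality from the complex isomorphic representation~(\ref{ct}), whereas the paper simply invokes the classical reference; otherwise the structure and the scalar analysis coincide.
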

 \begin{proof}
 For quaternion matrices $\dot{\textbf{X}}\in\mathbb{H}^{M \times N}$ and $\dot{\textbf{Y}}\in\mathbb{H}^{M \times N}$, the QSVD of them is  $\dot{\textbf{X}}=\dot{\mathbf{U}}_{\dot{\textbf{X}}}(\mathbf{\Lambda}_{\dot{\textbf{X}}})\dot{\mathbf{V}}_{\dot{\textbf{X}}}^H$ and $\dot{\textbf{Y}}=\dot{\mathbf{U}}_{\dot{\textbf{Y}}}(\mathbf{\Lambda}_{\dot{\textbf{Y}}})\dot{\mathbf{V}}_{\dot{\textbf{Y}}}^H$. Then we have
\begin{equation}
\begin{aligned}
&\frac{1}{2}\parallel \dot{\textbf{Y}} -\dot{\textbf{X}}\parallel_F^2+\lambda\parallel\dot{\textbf{X}}\parallel_L^1 \\&=
\frac{1}{2}(Tr(\dot{\textbf{Y}}^H\dot{\textbf{Y}})-2Tr(\dot{\textbf{Y}}^H\dot{\textbf{X}})+
Tr(\dot{\textbf{X}}^H\dot{\textbf{X}}))+\lambda\parallel\dot{\textbf{X}}\parallel_L^1 \\&=
\frac{1}{2}(\sum\limits_{i}\sigma_i (\dot{\textbf{Y}}) -2Tr(\dot{\textbf{Y}}^H\dot{\textbf{X}})+
\sum\limits_{i}\sigma_i (\dot{\textbf{X}}))+\lambda\parallel\dot{\textbf{X}}\parallel_L^1 \\&\geq
\frac{1}{2}(\sum\limits_{i}\sigma_i (\dot{\textbf{Y}}) -2\sigma_i (\dot{\textbf{Y}}\sigma_i (\dot{\textbf{X}})+
\sum\limits_{i}\sigma_i (\dot{\textbf{X}}))+\lambda\parallel\dot{\textbf{X}}\parallel_L^1 \\&=
\sum\limits_{i}\frac{1}{2}(\sigma_i (\dot{\textbf{Y}})-\sigma_i (\dot{\textbf{X}}))^2+\lambda\log(\sigma_i (\dot{\textbf{X}})+\epsilon),
\end{aligned}
\end{equation}
where the first inequality follows from \cite{mirsky1975trace}. Based on the von Neumann's trace inequality in \cite{mirsky1975trace},  $Tr(\dot{\textbf{Y}}^H\dot{\textbf{X}})$ reaches its upper bound $\sum\limits_{i}\sigma_i (\dot{\textbf{Y}})\sigma_i (\dot{\textbf{X}})$ when $\dot{\mathbf{U}}_{\dot{\textbf{X}}}=\dot{\mathbf{U}}_{\dot{\textbf{Y}}}$, $\dot{\mathbf{V}}_{\dot{\textbf{X}}}=\dot{\mathbf{V}}_{\dot{\textbf{Y}}}$. Then we have
\begin{equation}
\begin{aligned}
&\min\limits_{\dot{\mathbf{X}}}\frac{1}{2}\parallel \dot{\textbf{Y}} -\dot{\textbf{X}}\parallel_F^2+\lambda\parallel\dot{\textbf{X}}\parallel_L^1 \Leftrightarrow \\&
\min\limits_{\{\sigma_i (\dot{\mathbf{X}})\geqslant 0\}}\sum\limits_{i}\frac{1}{2}(\sigma_i (\dot{\textbf{Y}})-\sigma_i (\dot{\textbf{X}}))^2+\lambda\log(\sigma_i (\dot{\textbf{X}})+\epsilon),\\&
 \text{s.t.}\quad \sigma_1(\dot{\textbf{X}})\geqslant\sigma_2(\dot{\textbf{X}})\geqslant\cdots\geqslant\sigma_{\min\{M,N\}}(\dot{\textbf{X}}) \geqslant 0.
\end{aligned}
\end{equation}
Next, we study how to find the lower bound of optimal $\dot{\textbf{X}}$. The singular value $\sigma_i (\dot{\textbf{X}})$ can be determined by
\begin{equation}\label{m11}
\min\limits_{\{\sigma_i (\dot{\mathbf{X}})\geqslant 0\}}\mathit{h}(\sigma_i (\dot{\textbf{X}}))=\frac{1}{2}(\sigma_i (\dot{\textbf{Y}})-\sigma_i (\dot{\textbf{X}}))^2+\lambda\log(\sigma_i (\dot{\textbf{X}})+\epsilon),
\end{equation}
which is differentiable in $[0, +\infty)$. Hence, we can get the first-order derivative of $\mathit{h}(\sigma_i (\dot{\textbf{X}}))$
\begin{equation}
\mathit{h}^{'}(\sigma_i (\dot{\textbf{X}}))=\sigma_i (\dot{\textbf{X}})-\sigma_i (\dot{\textbf{Y}})+\dfrac{\lambda}{\sigma_i (\dot{\textbf{X}})+\epsilon},
\end{equation}
 further, we have
 \begin{equation}\label{m12}
(\sigma_i (\dot{\textbf{X}})+\epsilon) \mathit{h}^{'}(\sigma_i (\dot{\textbf{X}}))=\sigma_i^2 (\dot{\textbf{X}})-(\sigma_i (\dot{\textbf{Y}})-\epsilon)\sigma_i (\dot{\textbf{X}})+\lambda-\sigma_i (\dot{\textbf{Y}})\epsilon,
 \end{equation}
 where the $\Delta_i=(\sigma_i (\dot{\textbf{Y}})-\epsilon)^2-4(\lambda-\sigma_i (\dot{\textbf{Y}})\epsilon)$. For (\ref{m12}), if $\Delta_i\leq0$, the LHS $(\sigma_i (\dot{\textbf{X}})+\epsilon) \mathit{h}^{'}(\sigma_i (\dot{\textbf{X}}))\geqslant0$. It means that $\mathit{h}(\sigma_i (\dot{\textbf{X}}))$ is monotonically increasing, so 0 is the optimal solution. If  $\Delta_i>0$, $\mathit{h}^{'}(\sigma_i (\dot{\textbf{X}})=0$  in two roots. Based on the possible monotony of the function $\mathit{h}(\sigma_i (\dot{\textbf{X}}))$, the results only include 0 root and the larger root $\frac{1}{2}(\sigma_i (\dot{\textbf{Y}})-\epsilon+\sqrt{\Delta-i})$, which is shown in (\ref{tm1}). \end{proof}

Based on the above discussion, the QLSVT can be applied to (\ref{m10}) for updating $\dot{\textbf{U}}$. By similar approach, we can get the update of $\dot{\textbf{V}}$. 

\textbf{Updating $\dot{\textbf{V}}$ }:
\begin{equation}
\dot{\textbf{V}}_{k+1}= \mathop{\arg\min}\limits_{
	\dot{\textbf{V}}} \frac{\lambda}{2}\parallel\dot{\textbf{V}}\parallel_L^1+\parallel \mathbf{W}\odot (\dot{\textbf{U}}_{k+1}\dot{\textbf{V}}^H-\dot{\mathbf{M}})\parallel_F ^2.
\end{equation}
Let  $\mathcal{P}(\dot{\textbf{V}}) = \parallel \mathbf{W}\odot (\dot{\textbf{U}}_{k+1}\dot{\textbf{V}}^H-\dot{\mathbf{M}})\parallel_F ^2$.
Based on the FISTA, the update of $\dot{\textbf{V}}_{k+1}$ can be written as 
\begin{equation}\label{m13}
\begin{cases}
\hat{\dot{\textbf{V}}}_{k}=\dot{\textbf{V}}_{k}+\omega_k(\dot{\textbf{V}}_{k}-\dot{\textbf{V}}_{k-1}).\\
\dot{\textbf{V}}_{k+1}=\mathop{\arg\min}\limits_{\dot{\textbf{V}}} \frac{\lambda}{2}\parallel\dot{\textbf{V}}\parallel_L^1+ \frac{\mu_k}{2}\parallel \dot{\textbf{V}} - \hat{\dot{\textbf{V}}}_{k}+ \frac{1}{\mu_k}\nabla\mathcal{P}(\hat{\dot{\textbf{V}}}_{k})\parallel_F^2.
\end{cases}
\end{equation}
Using the theories of quaternion matrix derivatives, the gradient of $\mathcal{P}(\hat{\dot{\textbf{V}}})$ is 
\begin{equation}
\frac{\partial\mathcal{P}(\dot{\textbf{V}})}{\partial \dot{\textbf{V}}^*}=
\mathbf{W}\odot(\frac{1}{2}\dot{\textbf{V}}\dot{\textbf{U}}_{k+1}^H\dot{\textbf{U}}_{k+1} - \frac{1}{2}\dot{\mathbf{M}}^H\dot{\textbf{U}}_{k+1}).
\end{equation}
Taking the above gradient into (\ref{m13}), we can get
\begin{equation}\label{m14}
\begin{cases}
\hat{\dot{\textbf{V}}}_{k}=\dot{\textbf{V}}_{k}+\omega_k(\dot{\textbf{V}}_{k}-\dot{\textbf{V}}_{k-1}).\\
\dot{\textbf{V}}_{k+1}=\mathop{\arg\min}\limits_{\dot{\textbf{V}}} \frac{\lambda}{2}\parallel\dot{\textbf{V}}\parallel_L^1+ \frac{\mu_k}{2}\parallel \dot{\textbf{V}} - \hat{\dot{\textbf{V}}}_{k}+ \frac{1}{2\mu_k}\mathbf{W}\odot(\hat{\dot{\textbf{V}}}_k\dot{\textbf{U}}_{k+1}^H - \dot{\mathbf{M}}^H)\hat{\dot{\textbf{U}}}_{k+1})\parallel_F^2.
\end{cases}
\end{equation}
Then the QLSVT can be used to solve the above model. The whole steps of the QLNF algorithm are given in Algorithm \ref{a1}.

\begin{algorithm}[htbp]
	\caption{The QLNF  algorithm for quaternion matrix completion}
%	\hrule
	\label{a1}
	\begin{algorithmic}[1]
		\REQUIRE   the incomplete matrix data $\dot{\mathbf{M}}\in\mathbb{H}^{M\times N}$, the real matrix $\mathbf{W}$ is used to determine the position of observed and missing elements, $\lambda$, $\mu_{min}$.
		\STATE \textbf{Initial} $\dot{\mathbf{X}}_1=\dot{\mathbf{M}}$,  $t_1$,  $\omega_1$.
		\STATE \textbf{Repeat}
		\STATE Update $\mu_{k+1}^{\dot{\textbf{U}}}=\max \{\parallel\dot{\textbf{V}}_k\parallel_F^2,\mu_{min}\}$.
		\STATE Update $\dot{\textbf{U}}_{k+1}$ by (\ref{m10}).
		\STATE Update $\mu_{k+1}^{\dot{\textbf{V}}}=\max \{\parallel\dot{\textbf{U}}_{k+1}\parallel_F^2,\mu_{min}\}$.
		\STATE Update $\dot{\textbf{V}}_{k+1}$ by (\ref{m14}).
		\STATE $k \longleftarrow k+1.$
		\STATE \textbf{Until convergence}
		\STATE Update $\dot{\textbf{X}}_{opt}=\dot{\textbf{U}}_{k+1}\dot{\textbf{V}}_{k+1}^H.$ 
		\ENSURE  the recovered quaternion matrix.
	\end{algorithmic}
%	\hrule
\end{algorithm}
 \subsection{The proposed Truncated Quaternion-based Logarithmic Norm Approximation (TQLNA)}
 As we mentioned in the problem of matrix low rank optimization above, the truncated nuclear norm can achieve a better approximation of the rank function than the nuclear norm by using a two-step iterative scheme \cite{DBLP:journals/pami/HuZYLH13}.  Motivated by this approach, we incorporate the truncated skill and QLN to induce the truncated quaternion-based logarithmic norm as follows
  \begin{definition}(\textbf{TQLN}):
 	\label{def5}
 	Given $\dot{\textbf{X}}\in\mathbb{H}^{M \times N}$, the truncated logarithmic norm of the quaternion matrix with $0\leq p \leq1$ and $\epsilon>0$ is defined as the sum of logarithmic function of  $min(M,N)-r$ minimum singular values: 
 	\begin{equation}
 	\parallel\dot{\textbf{X}}\parallel_{L,r}^p=\sum_{i=r+1}^{min(M,N)}\log(\sigma_{i}^P(\dot{\mathbf{X}})+\epsilon),
 	\end{equation}
 	where $\sigma_{i}$  can be obtained by the QSVD of  $ \dot{\textbf{X}}$.
 \end{definition}
 
 Since the first few largest singular values make no difference to  rank, we will discard them in the TQLN and be committed to optimizing the smallest ${min(M,N)}-r$ singular values  to get a more accurate low rank estimation. According to  the TQLN, the completion based on the low rank minimization model (\ref{m6}) can be formulated as follows
 \begin{equation}\label{m15}
 \min\limits_{\dot{\mathbf{X}}} \parallel\dot{\mathbf{X}}\parallel_{L,r}^p,  \qquad  \text{ s.t.}   \quad   P_\Omega(\dot{\mathbf{X}}-\dot{\mathbf{M}})=0.
 \end{equation}

We have the following theorem based on the Von Neumann's trace inequality \cite{mirsky1975trace} to solve the TQLN. 
\begin{theorem}
	\label{theorem5}
	Given matrix $\dot{\textbf{X}}\in\mathbb{H}^{M \times N}$, and any matrices $\dot{\textbf{A}}\in\mathbb{H}^{r \times M}$  and $\dot{\textbf{B}}\in\mathbb{H}^{r \times N}$ that are satisfied with $\dot{\textbf{A}}\dot{\textbf{A}}^{H}=\textbf{I}_{r\times r}$, $\dot{\textbf{B}}\dot{\textbf{B}}^{H}=\textbf{I}_{r\times r}$. r is any  integer $(r\leq min(M,N))$, we have
	\begin{equation}
	\mid tr(\dot{\textbf{A}}\dot{\mathbf{X}}\dot{\mathbf{B}}^{H})\mid\leq\sum_{i=1}^r\sigma_{i}(\dot{\mathbf{X}}).
	\end{equation}
	Furthermore, 
	$\max |tr(\dot{\mathbf{A}}\dot{\mathbf{X}}\dot{\mathbf{B}}^H)|=\sum_{i=1}^r\sigma_{i}(\dot{\mathbf{X}}).$
\end{theorem}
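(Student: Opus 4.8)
The plan is to reduce the inequality to the QSVD of $\dot{\textbf{X}}$ and then estimate the resulting trace componentwise, using only the triangle inequality and Cauchy--Schwarz for the quaternion modulus; this detour is forced by the non-commutativity of $\mathbb{H}$, which makes $tr(\dot{\textbf{P}}\dot{\textbf{Q}})\neq tr(\dot{\textbf{Q}}\dot{\textbf{P}})$ in general (only the real parts coincide), so the one-line complex-case reduction ``$tr(\dot{\textbf{A}}\dot{\textbf{X}}\dot{\textbf{B}}^{H})=tr(\dot{\textbf{X}}\dot{\textbf{B}}^{H}\dot{\textbf{A}})$ plus Mirsky's inequality'' is not available. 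Concretely, I would write $\dot{\textbf{X}}=\dot{\textbf{U}}\mathbf{\Lambda}\dot{\textbf{V}}^{H}$ by Theorem~\ref{theorem1}, with $\dot{\textbf{U}}\in\mathbb{H}^{M\times M}$ and $\dot{\textbf{V}}\in\mathbb{H}^{N\times N}$ unitary and $\mathbf{\Lambda}$ the real matrix carrying $\sigma_{1}\geq\sigma_{2}\geq\cdots\geq 0$ on its diagonal. Setting $\dot{\textbf{M}}:=\dot{\textbf{A}}\dot{\textbf{U}}\in\mathbb{H}^{r\times M}$ and $\dot{\textbf{N}}:=\dot{\textbf{B}}\dot{\textbf{V}}\in\mathbb{H}^{r\times N}$, unitarity gives $\dot{\textbf{M}}\dot{\textbf{M}}^{H}=\dot{\textbf{A}}\dot{\textbf{A}}^{H}=\textbf{I}_{r}$ and $\dot{\textbf{N}}\dot{\textbf{N}}^{H}=\textbf{I}_{r}$, while $tr(\dot{\textbf{A}}\dot{\textbf{X}}\dot{\textbf{B}}^{H})=tr(\dot{\textbf{M}}\mathbf{\Lambda}\dot{\textbf{N}}^{H})$ simply by regrouping the product; expanding the latter trace directly (no cyclic property used) yields $tr(\dot{\textbf{M}}\mathbf{\Lambda}\dot{\textbf{N}}^{H})=\sum_{k}\sigma_{k}\dot{q}_{k}$ with $\dot{q}_{k}:=\sum_{i}\dot{m}_{ik}\dot{n}_{ik}^{*}\in\mathbb{H}$.

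Next I would bound the quaternions $\dot{q}_{k}$. Writing $\dot{\textbf{m}}_{k}$, $\dot{\textbf{n}}_{k}$ for the $k$-th columns of $\dot{\textbf{M}}$, $\dot{\textbf{N}}$, the triangle inequality and Cauchy--Schwarz for the modulus give $|\dot{q}_{k}|\leq\sum_{i}|\dot{m}_{ik}||\dot{n}_{ik}|\leq\|\dot{\textbf{m}}_{k}\|\|\dot{\textbf{n}}_{k}\|$. Because $\dot{\textbf{M}}\dot{\textbf{M}}^{H}=\textbf{I}_{r}$, the matrix $\dot{\textbf{M}}$ is a co-isometry, so $\dot{\textbf{M}}^{H}\dot{\textbf{M}}\preceq\textbf{I}_{M}$ and its diagonal entries $\|\dot{\textbf{m}}_{k}\|^{2}$ lie in $[0,1]$; the same holds for $\dot{\textbf{N}}$. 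Hence $0\leq|\dot{q}_{k}|\leq 1$, and, combining with $\sum_{k}\|\dot{\textbf{m}}_{k}\|^{2}=\|\dot{\textbf{M}}\|_{F}^{2}=tr(\dot{\textbf{M}}\dot{\textbf{M}}^{H})=r$ and one more Cauchy--Schwarz, $\sum_{k}|\dot{q}_{k}|\leq\|\dot{\textbf{M}}\|_{F}\|\dot{\textbf{N}}\|_{F}=r$. Therefore $|tr(\dot{\textbf{A}}\dot{\textbf{X}}\dot{\textbf{B}}^{H})|\leq\sum_{k}\sigma_{k}|\dot{q}_{k}|$, and since the vector $(|\dot{q}_{k}|)_{k}$ lies in the polytope $\{t:0\leq t_{k}\leq 1,\ \sum_{k}t_{k}\leq r\}$, on which the linear functional $\sum_{k}\sigma_{k}t_{k}$ with non-increasing weights attains its maximum at $t_{1}=\cdots=t_{r}=1$, we conclude $|tr(\dot{\textbf{A}}\dot{\textbf{X}}\dot{\textbf{B}}^{H})|\leq\sum_{i=1}^{r}\sigma_{i}(\dot{\textbf{X}})$.

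For the ``furthermore'' part I would exhibit a pair attaining the bound: take $\dot{\textbf{A}}=[\textbf{I}_{r}\ \ \textbf{0}]\dot{\textbf{U}}^{H}$ and $\dot{\textbf{B}}=[\textbf{I}_{r}\ \ \textbf{0}]\dot{\textbf{V}}^{H}$. These satisfy $\dot{\textbf{A}}\dot{\textbf{A}}^{H}=\dot{\textbf{B}}\dot{\textbf{B}}^{H}=\textbf{I}_{r}$, and $\dot{\textbf{A}}\dot{\textbf{X}}\dot{\textbf{B}}^{H}=[\textbf{I}_{r}\ \ \textbf{0}]\mathbf{\Lambda}[\textbf{I}_{r}\ \ \textbf{0}]^{H}=\mathrm{diag}(\sigma_{1},\dots,\sigma_{r})$, so $|tr(\dot{\textbf{A}}\dot{\textbf{X}}\dot{\textbf{B}}^{H})|=\sum_{i=1}^{r}\sigma_{i}(\dot{\textbf{X}})$, which meets the upper bound; hence the maximum over all admissible $\dot{\textbf{A}},\dot{\textbf{B}}$ is exactly $\sum_{i=1}^{r}\sigma_{i}(\dot{\textbf{X}})$.

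The step I expect to be the main obstacle is precisely keeping track of what does and does not survive non-commutativity: one must resist cycling the trace, keep $\dot{\textbf{M}}\mathbf{\Lambda}\dot{\textbf{N}}^{H}$ assembled, and push the estimate through entry by entry rather than quoting the complex von Neumann/Mirsky inequality verbatim. A secondary point to handle with care is the column-norm bound $\|\dot{\textbf{m}}_{k}\|\leq 1$: it does not follow by inspection from $\dot{\textbf{M}}\dot{\textbf{M}}^{H}=\textbf{I}_{r}$ but requires the observation that a co-isometry has operator norm $1$, equivalently that $\dot{\textbf{M}}^{H}\dot{\textbf{M}}$ is an orthogonal projection.
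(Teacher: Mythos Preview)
Your proposal is correct and follows essentially the same route as the paper: reduce via the QSVD $\dot{\textbf{X}}=\dot{\textbf{U}}\mathbf{\Lambda}\dot{\textbf{V}}^{H}$, set $\dot{\textbf{M}}=\dot{\textbf{A}}\dot{\textbf{U}}$ and $\dot{\textbf{N}}=\dot{\textbf{B}}\dot{\textbf{V}}$ (still co-isometries), expand the trace entrywise, and then finish with a combinatorial bound; the attaining pair $\dot{\textbf{A}}=[\textbf{I}_{r}\ \textbf{0}]\dot{\textbf{U}}^{H}$, $\dot{\textbf{B}}=[\textbf{I}_{r}\ \textbf{0}]\dot{\textbf{V}}^{H}$ is also the paper's choice. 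The only noticeable difference is the last step: the paper applies the triangle inequality termwise to obtain the real $r\times M$ matrix $\mathbf{P}=(|\dot{u}_{ij}\dot{v}_{ij}^{*}|)$, shows it is doubly substochastic via AM--GM on row and column sums, and then invokes cited majorization results (Marshall--Olkin, Horn--Johnson) to conclude $\sum_{j}\sigma_{j}(\sum_{i}\mathbf{P}_{ij})\leq\sum_{i=1}^{r}\sigma_{i}$; you instead group first to $\dot{q}_{k}=\sum_{i}\dot{m}_{ik}\dot{n}_{ik}^{*}$, bound $|\dot{q}_{k}|\leq 1$ and $\sum_{k}|\dot{q}_{k}|\leq r$ via Cauchy--Schwarz and the co-isometry/projection property, and then use the elementary LP fact that $\sum_{k}\sigma_{k}t_{k}$ over $\{0\leq t_{k}\leq 1,\ \sum_{k}t_{k}\leq r\}$ is maximized at $t_{1}=\cdots=t_{r}=1$. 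Your packaging is a bit more self-contained (no external majorization citations), while the paper's doubly-substochastic viewpoint makes the connection to classical matrix theory explicit; both are valid.
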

The proof of Theorem \ref{theorem5} can be found in the Appendix.

Then we can rewrite the model as
 \begin{equation}\label{m16}
\begin{aligned}
&\min\limits_{\dot{\mathbf{X}}} \lambda\parallel\dot{\mathbf{X}}\parallel_{L}^p- \mathop{\max}\limits_{\dot{\mathbf{C}}\dot{\mathbf{C}}^H=\mathbf{I},\dot{\mathbf{D}}\dot{\mathbf{D}}^H=\mathbf{I}} |tr(\dot{\mathbf{C}}\dot{\mathbf{X}}\dot{\mathbf{D}}^H)| ,  \qquad \\& \text{ s.t.}   \quad   P_\Omega(\dot{\mathbf{X}}-\dot{\mathbf{M}})=0.
 \end{aligned}
\end{equation}
This procedure is summarized in Algorithm \ref{a2}.
 \begin{algorithm}[htbp]
 	\caption{The two-step TQLN algorithm}
 	\label{a2}
 	\begin{algorithmic}[1]
 		\REQUIRE   the incomplete matrix data $\dot{\mathbf{M}}\in\mathbb{H}^{M\times N}$, the position set of observed elements $\Omega$, and the tolerance $\varepsilon_0$.
 		\STATE \textbf{Initial} $\dot{\mathbf{X}}_1=\dot{\mathbf{M}}$.
 		\STATE \textbf{Repeat}
 		\STATE \quad \textbf{Step 1.} Given $\dot{\mathbf{X}}_{k}$\\
 		\qquad  \qquad \qquad  $[\dot{\mathbf{U}}_{k},\mathbf{\Sigma}_{k},\dot{\mathbf{V}}_{k}]=QSVD(\dot{\mathbf{X}}_{k})$
 		\STATE  \quad where $\dot{\mathbf{U}}_{k}=(\dot{\mathbf{u}}_{1},\cdots, \dot{\mathbf{u}}_{m})\in\mathbb{H}^{M\times M}$,\\
 		\qquad \quad\quad $\dot{\mathbf{V}}_{k}=(\dot{\mathbf{v}}_{1},\cdots, \dot{\mathbf{v}}_{n})\in\mathbb{H}^{N\times N}$.
 		\STATE   \quad Computing $\dot{\mathbf{C}}_k=(\dot{\mathbf{u}}_{1},\cdots, \dot{\mathbf{u}}_{r})^T\in\mathbb{H}^{r\times M}$ and \\ \qquad\qquad\quad\quad$\dot{\mathbf{D}}_k=(\dot{\mathbf{v}}_{1},\cdots, \dot{\mathbf{v}}_{r})^T\in\mathbb{H}^{r\times N}$.
 		\STATE \quad \textbf{Step 2.} Solving  \\
 		\qquad \quad\quad $\dot{\mathbf{X}}_{k+1}=\arg\min\lambda\parallel\dot{\mathbf{X}}\parallel_L^p-tr(\dot{\mathbf{C}}_k\dot{\mathbf{X}}\dot{\mathbf{D}}_k^T)$
 		\STATE \textbf{Until convergence} $\|\dot{\mathbf{X}}_{k+1}-\dot{\mathbf{X}}_{k}\|_F \leq \varepsilon_0$
 		\ENSURE  the recovered quaternion matrix.
 	\end{algorithmic}
 \end{algorithm}

In the first step, by computing QSVD of  $\dot{\mathbf{X}}_{k}$, we can get $\dot{\mathbf{C}}_k$ and $\dot{\mathbf{D}}_k$. In step 2, we  solve the model (\ref{m16}) by using the ADMM framework as in the complex or quaternion  filed \cite{DBLP:journals/tsp/MiaoK20, li2015alternating}. We first reformulate  model (\ref{m16}) as 
\begin{equation}\label{m17}
\begin{aligned}
&\min\limits_{\dot{\mathbf{X}},\dot{\mathbf{H}}}\lambda\parallel\dot{\mathbf{X}}\parallel_L^p- |tr(\dot{\mathbf{C}}_l\dot{\mathbf{H}}\dot{\mathbf{D}}_l^H)|  \qquad   \\& \text{s.t.}  \quad \dot{\mathbf{X}}=\dot{\mathbf{H}} \quad  P_\Omega(\dot{\mathbf{H}}-\dot{\mathbf{M}})=0,
\end{aligned}
\end{equation}
where $\dot{\mathbf{H}}$ is an intermediate variable. Significantly, the multiplication in the quaternion domain is not commutative, so the Lagrange function of problem  (\ref{m17}) can be written as 
\begin{equation}\label{m18}
\begin{aligned}
L(\dot{\mathbf{X}},\dot{\mathbf{H}},\dot{\mathbf{Y}}, \beta)=&\lambda\parallel\dot{\mathbf{X}}\parallel_L^p- |tr(\dot{\mathbf{C}}_l\dot{\mathbf{H}}\dot{\mathbf{D}}_l^H)|\\&
+\frac{\beta}{2}\parallel\dot{\mathbf{X}}-\dot{\mathbf{H}}\parallel_F^2\\&
+\mathfrak{R}(tr(\dot{\mathbf{Y}}^H(\dot{\mathbf{X}}-\dot{\mathbf{H}}))),
\end{aligned}
\end{equation}
where $\beta_k$ is a positive penalty parameter,  and $\dot{\mathbf{Y}}$ is the Lagrange multiplier.

\textbf{Updating $\dot{\mathbf{X}}$}:
Assuming $\dot{\mathbf{X}}_{\tau +1}$ is $\tau +1$-\textit{th} iteration result in step 2 and keeping other variables with the latest value, the optimal solution of  $\dot{\mathbf{X}}_{\tau +1}$ can be obtained from the following problem
\begin{equation}\label{m19}
\begin{aligned}
\dot{\mathbf{X}}_{\tau +1}&=\arg\min_{\dot{\mathbf{X}}}\lambda\parallel\dot{\mathbf{X}}\parallel_L^p
 +\frac{\beta_\tau}{2} \parallel\dot{\mathbf{X}}-\dot{\mathbf{H}}_\tau \parallel_F^2
+\mathfrak{R}(tr(\dot{\mathbf{Y}}^H_\tau (\dot{\mathbf{X}}-\dot{\mathbf{H}}_\tau )))\\&
=\arg\min_{\dot{\mathbf{X}}}\lambda\parallel\dot{\mathbf{X}}\parallel_L^p+
\frac{\beta_\tau}{2}\parallel\dot{\mathbf{X}}-(\dot{\mathbf{H}}_\tau-\frac{1}{\beta_\tau}\dot{\mathbf{Y}}_\tau)\parallel_F^2
\end{aligned}
\end{equation}
Utilizing the QLSVT,  when  let $p=1$, $\dot{\mathbf{A}}=\dot{\mathbf{H}}_\tau-\frac{1}{\beta_\tau}\dot{\mathbf{Y}}_\tau$, we can get 
\begin{equation}\label{m20}
\dot{\mathbf{X}}_{\tau +1}=\dot{\mathbf{U}}_{\dot{\textbf{A}}}\mathcal{L}_{\frac{\lambda}{\beta_\tau}, \epsilon}(\mathbf{\Lambda}_{\dot{\textbf{A}}})\dot{\mathbf{V}}_{\dot{\textbf{A}}}^H
\end{equation}

\textbf{Updating $\dot{\mathbf{H}}$}:
Fixing $\dot{\mathbf{X}}_{\tau+1}$ and $\dot{\mathbf{Y}}_\tau$, the optimal solution of $\dot{\mathbf{H}}_{\tau +1}$ can be obtained from the following problem
\begin{equation}\label{m21}
\begin{aligned}
\dot{\mathbf{H}}_{\tau+1}&=\arg\min_{\dot{\mathbf{H}}}- |tr(\dot{\mathbf{C}}_l\dot{\mathbf{H}}\dot{\mathbf{D}}_l^H)|
+\frac{\beta_\tau}{2}\parallel\dot{\mathbf{X}}_{\tau+1}-\dot{\mathbf{H}}\parallel_F^2
+\mathfrak{R}(tr(\dot{\mathbf{Y}}^H_\tau(\dot{\mathbf{X}}_{\tau+1}-\dot{\mathbf{H}})))
\\&=\arg\min_{\dot{\mathbf{H}}}\frac{\beta_\tau}{2}\|\dot{\mathbf{H}}-(\dot{\mathbf{X}}_{\tau+1}+
\frac{1}{\beta_\tau}(\dot{\mathbf{C}}_l^H\dot{\mathbf{D}}_l+\dot{\mathbf{Y}}_\tau))\|^2_F.
\end{aligned}
\end{equation}
The closed  solution of  $\dot{\mathbf{H}}_{\tau+1}$ can be attained by
\begin{equation}\label{m22}
\dot{\mathbf{H}}_{\tau+1}=\dot{\mathbf{X}}_{\tau+1}+\frac{1}{\beta_\tau}(\dot{\mathbf{C}}_l^H\dot{\mathbf{D}}_l+\dot{\mathbf{Y}}_\tau).
\end{equation}
Then, we let the values of all observed elements be constant in each iteration and reach
\begin{equation}\label{m23}
\dot{\mathbf{H}}_{\tau+1}=P_{\Omega^C}(\dot{\mathbf{H}}_{\tau+1})+P_\Omega(\dot{\mathbf{M}}).
\end{equation}

\textbf{ Updating $\dot{\mathbf{Y}}_{\tau+1}$}: Fixing $\dot{\mathbf{X}}_{\tau+1}$ and $\dot{\mathbf{H}}_{\tau+1}$, the optimal solution of $\dot{\mathbf{Y}}_{\tau +1}$ can be obtained from the following problem:
\begin{equation}\label{model15}
\dot{\mathbf{Y}}_{\tau+1}=\dot{\mathbf{Y}}_{\tau}+\beta_\tau(\dot{\mathbf{X}}_{\tau+1}-\dot{\mathbf{H}}_{\tau+1}).
\end{equation}

\textbf{Updating the penalty parameter $\beta_{\tau+1}$}:
\begin{equation}
\beta_{\tau+1}=\rho\beta_{\tau}.
\end{equation}
The whole procedure in step 2 is summarized in Algorithm \ref{a3}.
 \begin{algorithm}[htbp]
	\caption{The optimization of TQLN using ADMM in step 2}
	\label{a3}
	\begin{algorithmic}[1]
		\REQUIRE    $\dot{\mathbf{M}}$, $\Omega$, $\dot{\mathbf{C}}_l$, $\dot{\mathbf{D}}_l$, and the tolerance $\varepsilon$, $\lambda$, $\rho$, $\beta_{max}$.
		\STATE \textbf{Initial} $\dot{\mathbf{X}}_1=\dot{\mathbf{M}}$, $\dot{\mathbf{H}}_1=\dot{\mathbf{X}}_1$, $\dot{\mathbf{Y}}_1=\dot{\mathbf{X}}_1$, and $\beta_1$.
		\STATE \textbf{Repeat}
		\STATE Update $\dot{\mathbf{X}}_{\tau+1}=\dot{\mathbf{U}}_{\dot{\textbf{A}}}\mathcal{L}_{\frac{\lambda}{\beta_\tau}, \epsilon}(\mathbf{\Lambda}_{\dot{\textbf{A}}})\dot{\mathbf{V}}_{\dot{\textbf{A}}}^H$.
		\STATE  Update $\dot{\mathbf{H}}_{\tau +1}=\dot{\mathbf{X}}_{\tau+1}+\frac{1}{\beta_\tau}(\dot{\mathbf{C}}_l^H\dot{\mathbf{D}}_l+\dot{\mathbf{Y}}_\tau)$
		\STATE   Update $\dot{\mathbf{Y}}_{\tau+1}=\dot{\mathbf{Y}}_{\tau}+\beta_\tau(\dot{\mathbf{X}}_{\tau+1}-\dot{\mathbf{H}}_{\tau+1}).$
		\STATE Update  $\beta_{\tau+1}=\min(\rho\beta_{\tau}, \beta_{max})$ 
		\STATE $\tau\longleftarrow \tau+1$
		\STATE \textbf{Until convergence} $\|\dot{\mathbf{X}}_{\tau+1}-\dot{\mathbf{X}}_{\tau}\|_F \leq \varepsilon$
		\ENSURE  $\dot{\mathbf{X}}_{\tau+1}$, $\dot{\mathbf{H}}_{\tau +1}$, $\dot{\mathbf{Y}}_{\tau+1}$.
	\end{algorithmic}
\end{algorithm}

\section{Experimental results}\label{Experimental results}
In this section, we verify the performance of the proposed TQNF and TQLNA methods. We compare  relevant and  state-of-art algorithms:
\begin{enumerate}
  \item \textbf{WNNM} \cite{DBLP:conf/cvpr/GuZZF14}: the weighted nuclear norm of the real matrix is used to depict low rank.
  \item \textbf{TNNR} \cite{DBLP:journals/pami/HuZYLH13}: the truncated nuclear norm of the real matrix is used to avoid the largest several singular values being punished much.
  \item \textbf{D-N} \cite{DBLP:journals/pami/ShangCLLL18}: the real bilinear factor matrices are proposed to  depict low rank based on nuclear norm.
  \item \textbf{F-N} \cite{DBLP:journals/pami/ShangCLLL18}: the real bilinear factor matrices are proposed to  depict low rank based on  Frobenius norm.
  \item \textbf{LRMF} \cite{DBLP:journals/tip/ChenJLZ21}: the real factor matrices are proposed to depict low rank based on  Logarithmic norm.
  \item \textbf{LRQA-2} \cite{DBLP:journals/tip/ChenXZ20}:  the Laplace function is used to replace QNN to depict the low rank of quaternion matrix.
  \item \textbf{Q-DNN} \cite{DBLP:journals/tsp/MiaoK20}:  bilinear factor quaternion matrices factorization is designed based on QNN.
   \item \textbf{Q-FNN} \cite{DBLP:journals/tsp/MiaoK20}: bilinear factor quaternion matrices factorization is designed based on quaternion Frobenius norm.
\end{enumerate}

All the experiments are operated in Matlab R2019a, on a PC with a 3.00GHz CPU and RAM of 8GB.

\textbf{Parameters setting:} For QLNF, we let $t_1=1$, $\omega_{0}=0$, $\lambda=1.25e-5$, and $\mu_{\min}=0.005$. The stopping criterion is defined as $means(\|\dot{\mathbf{U}}_{k+1}-\dot{\mathbf{U}}_{k}\|_F /\|\dot{\mathbf{U}}_k\|_F,\|\dot{\mathbf{V}}_{k+1}-\dot{\mathbf{V}}_{k}\|_F /\|\dot{\mathbf{V}}_k\|_F) \leq \epsilon_0$, where $\epsilon_0=0.001$. The appropriate $d$ is chosen from $\{6, 8, 10, 20\}$. For TQLNA, due to the  lack of prior knowledge of  the number of  truncated singular values, letting  $r = 1$ to report the experimental results, the better results can be achieved with appropriate adjustments of $r$. We let $\rho=1.5$, $\beta_{max}=10^7$, and $\beta_0=0.003$. The stopping criterion is $\|\dot{\mathbf{X}}_{k+1}-\dot{\mathbf{X}}_{k}\|_F /\|\dot{\mathbf{M}}\|_F\leq \epsilon_0$, where $\epsilon_0=0.001$. 
The parameters of these compared methods are set as the experimental settings reported in their papers individually. Eight frequently used color images are selected as  test samples, which  are shown in Figure  \ref{Ytu}. 
\begin{figure}%[htbp]
	\centering
	\includegraphics[width=100mm]{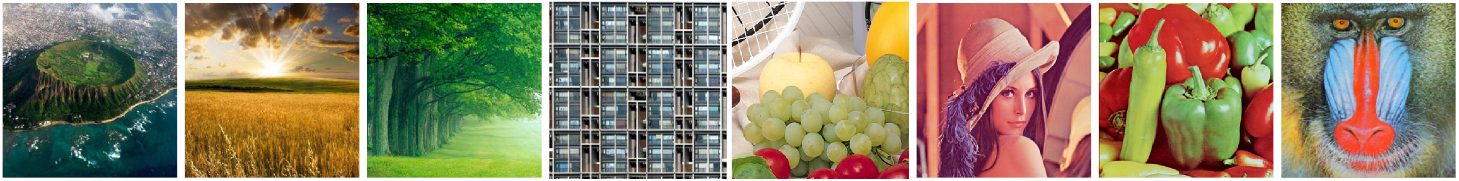}
	\caption{The $8$ color images (from left to right, Image(1) $300\times 300\times 3$,
		Image(2)  $300\times 300\times 3$, Image(3)  $300\times 300\times 3$, Image(4)  $300\times 300\times 3$, Image(5)  $300\times 300\times 3$, Image(6)  $300\times 300\times 3$, Image(7)  $300\times 300\times 3$, and Image(8)  $300\times 300\times 3$).}
	\label{Ytu}
\end{figure}

\textbf{Performance index setting:} The peak signal to noise rate (PSNR)  and the structural similarity index (SSIM)  are adopted to compare the quality of  reconstruction results.   The best results are \textbf{bolded} in experiments. For quaternion-based approaches, we load the test image as a whole quaternion matrix.  Matrix-based approaches are performed on each channel of the test image respectively. The smaller value of  Sampling Rate (SR) represents the higher degree of missing. 

Table \ref{t1} and Table \ref{t2} display the results of recovering these 8 images. The PSNR, SSIM, and average CPU time (seconds) of all testing images reconstructed by 10 utilized recovering methods for different SRs (10\%, 15\%, 25\%, 35\%, 45\%, and 50\%, respectively).  Figure \ref{0.85} displays the corresponding visual  comparisons of  SR $= 15\%$ between the  two designed methods and other methods of comparison on the above eight tested color images.

Figure \ref{baboon} compares  the visual results among all competing inpainting methods on  recovering image(8) with SR $= 20\%$. The corresponding PSNR and SSIM are given in Table \ref{babooni}. By comparing PSNR values, we also test the robustness of TQLNA with different truncated number  $r\in[1, 20]$ when SR $= 40\%$ in Figure \ref{r}. 

In addition to the above eight images, we randomly select 60 color images from Berkeley Segmentation Dataset (BSD) \footnote{Available: https://www2.eecs.berkeley.edu/Research/Projects/CS/vision/bsds/} to fully explain the effectiveness of the proposed algorithms. In Figure \ref{0.8}, we map out the PSNR and 
SSIM values on the 60 color images with SR $= 20\%$. The average values are reported in Table \ref{0.8i}. Similarly, in Figure \ref{0.6}, we map out the PSNR and
SSIM values on the 60 color images with SR $= 40\%$. The average values are reported in Table \ref{0.6i}.  All the straight lines  represent the average results obtained by ten methods. 

From the above experimental results, we can draw the following summaries.
\begin{itemize}
	\item When testing the eight images, we set SR$ = 10\%$ as an extreme case. In these experiments, we find that our methods are able to get higher PSNR and SSIM values in most cases (Six-eighths) without changing the truncated number $r$ of TQLNA method. For QLNF method, it also attains better parameter results when the missing rate is high. On the contrary, when  SR$ = 50\%$, the TQLNA method is the  most effective way to restore images from the respect of  PSNR and SSIM values.  In Figure \ref{baboon}, it is observed that TQLNA  recovery more details of  the zoomed-in section, and Table \ref{babooni} further confirms this conclusion. This is mainly as a result of the superiority from the quaternion representation of color pixel values.
	\item  When  using the BSD datasets, we set SR$ = 20\%$ and SR$ = 40\%$ and draw the PSNR, SSIM values and average values of each picture. The TQLNA method is able to achieve better results in almost all cases, but the effectiveness of QLNF method would be slightly reduced when the degree of  absence is lower. Besides, the TNNR method can also reach superior results. Profiting from this truncated skill, our quaternion-based method  is optimized.
	\item For the time consumption, the matrix-based factorization methods like D-N, F-N are much more time-saving than others. This is due to only two  smaller factor matrices need to be handled. Although our method is a bit time-consuming, it still reduces the time cost in the quaternion domain, especially for TQLNA method.
\end{itemize} 

\renewcommand{\arraystretch}{1.5}
\begin{table*}
	\centering
	\fontsize{6.5}{8}\selectfont
	\begin{threeparttable}
		\caption{THE PSNR and  SSIM  obtained by different completion algorithms for 8 color images (image(1)-image(4))}
		\scalebox{0.8}{
			\begin{tabular}{ccccccccccccccc}
				\toprule
				\multirow{2}{*}{Image}
				& SR& \multicolumn{2}{c}{10$\%$}&\multicolumn{2}{c}{15$\%$}&\multicolumn{2}{c}{25$\%$}&\multicolumn{2}{c}{35$\%$}&\multicolumn{2}{c}{45$\%$}&\multicolumn{2}{c}{50$\%$}&\multirow{2}{*}{Aver. Time}\cr
				\cmidrule(lr){3-4}\cmidrule(lr){5-6}\cmidrule(lr){7-8}
				\cmidrule(lr){9-10}\cmidrule(lr){11-12}\cmidrule(lr){13-14}
				&Method&PSNR&SSIM&PSNR&SSIM&PSNR&SSIM&PSNR&SSIM&PSNR&SSIM&PSNR&SSIM\cr
				\midrule
				\multirow{10}*{Image(1)}
				&TNNR%									
				
				&18.316&0.408&19.469&0.512&21.265&0.628
				&22.674&0.713&24.135&	0.787&24.949&0.819&98.122\\
				&WNNR
				&14.676&0.283&16.538&0.332&18.424&0.451	      &20.110&0.563&21.584&0.656&22.402&0.701&37.956\\
				&D-N
				&17.929	&0.3694&18.831&0.429&20.139&0.512	 &20.697&0.545&20.924&0.559&21.050&0.567&1.889\\
				&F-N
				&18.289&0.385&19.063&0.438&20.225&0.514
				&20.716&0.545&20.954&0.559&21.054&0.568&\textbf{1.244}\\
				&LRMF
				&17.830&0.376&18.444&0.427&18.790&0.460	 &20.643&0.572&22.200&0.652&22.697&0.678&1.451\\
				&LRQA-2
				&18.788&0.448&19.645&0.514&21.163&0.615 &22.598&0.703&24.052&0.778&24.876&0.811&81.105\\
				&Q-DNN
				&18.529&0.436&19.334&0.486&20.773&0.583
				&22.091&0.664&23.261&0.724&23.900&0.753&28.717\\
				&Q-FNN
				&18.779&0.448&19.546&0.496&20.808&0.576 &22.130&0.654&23.385&0.727&24.515&0.786&30.628\\
				&QLNF
				&18.668&0.427&19.443&0.482&20.699&0.560
				&21.525&0.606&22.461&0.662&22.955&0.689&39.814\\
				&TQLNA
				&\textbf{18.817}&\textbf{0.453}&\textbf{19.831}&\textbf{0.529}&\textbf{21.357}&\textbf{0.630}	 &\textbf{22.766}&\textbf{0.713}&\textbf{24.260}&\textbf{0.786}&\textbf{25.053}&\textbf{0.817}&28.507\\
																
				\cline{1-15}\multirow{10}*{Image(2)}			
				&TNNR%							
				
				&20.630&0.813&21.951&0.856&23.629&0.898
				&25.226&0.929&26.745&0.950&27.489&0.959&24.005\\
				&WNNR
				&14.087&0.572&19.136&0.749&21.282&0.832	      &23.042&0.886&24.543&0.920&25.269&0.934&35.459\\
				&D-N
				&19.985&0.793&21.140&0.830&22.260&0.864	 &22.755&0.877&22.989&0.884&23.068&0.886&1.673\\
				&F-N
				&20.251&0.807&21.220&0.834&22.381&0.866	 &22.772&0.877&22.991&0.883&23.071&0.886&\textbf{1.106}\\
				&LRMF
				&20.111&0.794&21.996&0.854&22.432&0.872	 &23.602&0.900&24.611&0.920&25.101&0.927&1.324\\
				&LRQA-2
				&20.863&0.821&21.852&0.851&23.564&0.894	 &25.072&0.924&26.505&0.945&27.253&0.954&76.154\\
				&Q-DNN
				&20.606&0.816&21.996&0.856&23.385&0.892
				&24.935&0.923&25.980&0.938&26.584&0.945&27.809	\\
				&Q-FNN 
				&20.775&0.819&21.996&0.856&23.466&0.894 &24.745&0.919&25.715&0.935&26.468&0.945&29.920\\
				&QLNF
				&20.732&0.819&21.666&0.846&22.805&0.879
				&23.363&0.892&24.114&0.909&24.302&0.913&40.087\\
				&TQLNA
				&\textbf{20.955}&\textbf{0.822}&\textbf{22.079}&\textbf{0.857}&
				\textbf{23.734}&\textbf{0.899	}	 &\textbf{25.390}&\textbf{0.930}&\textbf{26.881}&\textbf{0.951}&
				\textbf{27.680}&\textbf{0.961}&	23.756\\
												
				\cline{1-15}\multirow{10}*{Image(3)}
				&TNNR%								
				
				&21.291&0.807&22.957&0.856&24.380&0.893
				&25.794&0.920&27.183&0.941&27.867&0.949&108.649\\
				&WNNR
				&16.636&0.619&19.984&0.737&21.707&0.808      &23.351&0.862&24.794&0.897&25.548&0.912&37.360\\
				&D-N
				&21.358&0.808&22.337&0.838&23.312&0.867 &23.775&0.879&24.021&0.885&24.090&0.887&1.799\\
				&F-N
				&21.676&0.820&22.590&0.846&23.437&0.870
				&23.819&0.880&24.011&0.885&24.088&0.887&\textbf{1.144}\\
				&LRMF
				&21.321&0.804&22.291&0.830&23.050&0.854	 &24.288&0.885&25.405&0.908&25.803&0.915&1.369\\
				&LRQA-2
				&22.155&0.836&23.053&0.859&24.235&0.890	 &25.643&0.918&27.028&0.938&27.698&0.947&80.198\\
				&Q-DNN
				&21.798&0.827&22.818&0.850&24.029&0.882
				&25.355&0.910&26.376&0.927&26.900&0.934&28.708\\
				&Q-FNN
				&22.154&0.837&23.108&0.860&24.194&0.886 &25.236&0.907&26.363&0.926&26.919&0.934&30.493\\
				&QLNF
				&21.848&0.837&22.683&0.856&23.397&0.872
				&24.116&0.887&24.452&0.893&24.558&0.896&32.660\\
				&TQLNA
				&\textbf{22.318}&\textbf{0.845}&\textbf{23.286}&\textbf{0.868}&
				\textbf{24.578}&\textbf{0.898}	 &\textbf{25.943}&\textbf{0.923}&\textbf{27.385}&\textbf{0.943}&
				\textbf{28.076}&\textbf{0.951}&	18.037\\									
				\cline{1-15}\multirow{10}*{Image(4)}
				&TNNR%					
				
				&17.356&0.722&19.558&0.814&22.664&0.894
				&25.037&0.934&27.159&0.958&28.293&0.968&43.968\\
				&WNNR
				&11.990&0.392&17.031&0.683&21.572&0.856	      &24.269&0.917&26.275&0.945&27.231&0.956&35.826\\
				&D-N
				&17.495&0.712&19.399&0.799&20.661&0.844	 &21.055&0.858&21.323&0.866&21.404&0.868&1.928\\
				&F-N
				&17.685&0.717&19.428&0.801&20.588&0.840	 &21.062&0.858&21.312&0.865&21.378&0.865&\textbf{1.222}	\\
				&LRMF
				&17.023&0.686&19.151&0.782&22.464&0.881	 &24.763&0.925&26.008&0.942&26.358&0.947&1.269\\
				&LRQA-2
				&17.962&0.743&\textbf{20.135}&\textbf{0.825}&23.281&0.903	 &25.521&0.939&27.578&0.960&28.577&0.968&77.018\\
				&Q-DNN
				&17.894&0.742&19.913&0.813&23.290&0.901
				&25.697&0.940&27.690&0.960&28.534&0.967&28.371	\\
				&Q-FNN
				&17.767&0.733&19.922&0.816&\textbf{23.508}&\textbf{0.907}
				&25.616&0.939&26.849&0.953&27.627&0.960&	30.187\\
				&QLNF
				&17.881&0.739&19.764&0.816&22.461&0.887
				&23.831&0.913&24.511&0.923&24.995&0.931&38.249	\\
				&TQLNA
				&\textbf{18.023}&\textbf{0.744}&20.100&\textbf{0.825}&
				23.380&0.904	 &\textbf{25.843}&\textbf{0.942}&\textbf{28.165}&\textbf{0.965}&
				\textbf{29.210}&\textbf{0.973}&36.940\\
				\bottomrule
				\label{t1}
		\end{tabular}}
	\end{threeparttable}
\end{table*}

\renewcommand{\arraystretch}{1.5}
\begin{table*}
	\centering
	\fontsize{6.5}{8}\selectfont
	\begin{threeparttable}
		\caption{THE PSNR and  SSIM  obtained by different completion algorithms for 8 color images (image(5)-image(8))}
		\scalebox{0.8}{
			\begin{tabular}{ccccccccccccccc}
				\toprule
				\multirow{2}{*}{Image}
				& SR& \multicolumn{2}{c}{10$\%$}&\multicolumn{2}{c}{15$\%$}&\multicolumn{2}{c}{25$\%$}&\multicolumn{2}{c}{35$\%$}&\multicolumn{2}{c}{45$\%$}&\multicolumn{2}{c}{50$\%$}&\multirow{2}{*}{Aver. Time}\cr
				\cmidrule(lr){3-4}\cmidrule(lr){5-6}\cmidrule(lr){7-8}
				\cmidrule(lr){9-10}\cmidrule(lr){11-12}\cmidrule(lr){13-14}
				&Method&PSNR&SSIM&PSNR&SSIM&PSNR&SSIM&PSNR&SSIM&PSNR&SSIM&PSNR&SSIM\cr
				\midrule
		\multirow{10}*{Image(5)}
			&TNNR%					
			
			&16.944&0.625&19.200&0.700&21.326&0.781
			&23.148&0.837&24.664&0.874&25.308&0.891&67.839	\\
			&WNNR
			&11.828&0.370&16.773&0.556&19.286&0.670	      &21.185&0.747&22.735&0.797&23.327&0.814&34.571\\
			&D-N
			&17.532&0.614&18.622&0.674&19.709&0.730	 &20.294&0.759&20.559&0.771&20.619&0.775&1.778\\
			&F-N
			&17.751&0.630&18.760&0.687&19.782&0.737	 &20.312&0.761&20.599&0.774&20.604&0.776&\textbf{1.181}\\
			&LRMF
			&17.166&0.600&18.234&0.641&19.731&0.701	 &21.722&0.773&23.002&0.811&23.402&0.825&1.358\\
			&LRQA-2
			&18.264&0.663&19.598&0.715&21.459&0.781 &23.209&0.833&24.609&0.867&25.277&0.882&69.690\\
			&Q-DNN
			&17.940&0.635&19.240&0.688&21.279&0.765
			&23.005&0.818&24.220&0.847&24.730&0.858&28.348	\\
			&Q-FNN
			&\textbf{18.484}&\textbf{0.672}&19.649&0.716&21.495&0.777 &23.031&0.821&24.454&0.855&25.016&0.869&	30.952\\
			&QLNF
			&18.115&0.649&19.212&0.697&21.063&0.768
			&22.258&0.805&23.090&0.827&23.247&0.835&	43.310\\
			&TQLNA
			&18.411&0.668&\textbf{19.767}&\textbf{0.723}&\textbf{21.788}&\textbf{0.797} &\textbf{23.670}&\textbf{0.850}&\textbf{25.174}&\textbf{0.883}&\textbf{25.822}&\textbf{0.898}&33.542\\
						
			\cline{1-15}\multirow{10}*{Image(6)}
			&TNNR%					
			
			&18.229&	0.789&20.126&0.849&22.258&0.897
			&24.142&0.928&25.989&0.950&26.842&0.959&21.895\\
			&WNNR
			&12.955&0.542&17.738&0.735&20.348&0.829	      &22.417&0.886&24.082&0.918&24.955&0.932&38.762\\
			&D-N
			&18.025&0.778&19.787&0.840&20.950&0.874 &21.362&0.887&21.599&0.893&21.691&0.897&1.783\\
			&F-N
			&18.203&0.794&19.953&0.848&20.979&0.878	 &21.369&0.889&21.613&0.895&21.681&	0.896&\textbf{1.117}\\
			&LRMF
			&17.583&0.757&19.370&0.813&20.830&0.851
			&22.952&0.901&24.327&0.926&24.796&0.933&1.302\\
			&LRQA-2
			&\textbf{18.579}&\textbf{0.804}&20.394&0.852&22.502&0.898 &24.249&0.927&25.990&0.948&26.861&0.957&74.180\\
			&Q-DNN
			&18.508&0.795&20.198&0.842&22.380&0.892
			&24.180&0.924&25.579&0.942&26.195&0.949&28.342		\\
			&Q-FNN
			&18.528&0.801&20.331&0.849&22.502&0.895 &24.011&0.921&25.496&0.941&26.206&0.949&30.626\\
			&QLNF
			&18.444&0.801&20.227&0.850&22.053&0.891
			&23.063&0.910&23.834&0.922&24.369&0.930&39.760\\
			&TQLNA
			&18.439&0.796&\textbf{20.507}&\textbf{0.857}&\textbf{22.767}&\textbf{0.904	}
			&\textbf{24.625}&\textbf{0.933}&\textbf{26.488}&\textbf{0.954}&\textbf{27.309}&\textbf{0.961}&29.378\\						
			
			\cline{1-15}\multirow{10}*{Image(7)}
			&TNNR%				
			&16.956&0.750&18.966&0.824&21.550&0.889
			&23.858&0.929&25.779&0.952&26.804&0.962&45.206\\
			&WNNR
			&12.796&0.507&16.699&0.713&19.888&0.834      &22.302&0.896&24.227&0.931&25.011&0.941&	36.000\\
			&D-N
			&17.164&0.752&18.564&0.810&19.422&0.840	 &19.874&0.855&20.060&0.861&20.152&0.864&1.980\\
			&F-N
			&17.217&0.756&18.602&0.813&19.456&0.842	 &19.830&0.854&20.059&0.861&20.137&0.863&\textbf{1.213}\\
			&LRMF
			&16.363&0.716&18.153&0.786&20.773&0.864	 &23.113&0.914&24.342&0.933&24.762&0.939&1.269	\\
			&LRQA-2
			&17.419&0.769&19.568&0.838&22.140&0.898	 &24.402&0.935&26.301&0.957&27.298&0.965	&77.539\\
			&Q-DNN
			&17.187&0.752&19.301&0.824&21.992&0.892
			&24.289&0.932&25.852&0.952&26.517&0.958&28.411		\\
			&Q-FNN
			&17.505&0.774&19.788&0.844&22.474&0.904 &24.285&0.933&25.898&0.953&26.816&0.961&31.637\\
			&QLNF
			&17.237&0.763&19.483&0.835&21.682&0.889
			&23.331&0.920&24.168&0.933&24.063&0.931&36.881	\\
			&TQLNA
			&\textbf{17.549}&\textbf{0.774}&\textbf{19.826}&\textbf{0.848}&\textbf{22.528}&\textbf{0.908}	 
			&\textbf{25.000}&\textbf{0.944}&\textbf{26.877}&\textbf{0.963}&\textbf{27.926}&\textbf{0.971}&31.453\\
			\cline{1-15}\multirow{10}*{Image(8)}
			&TNNR%					
			
			&18.923&0.548&20.234&0.612&22.100&0.713
			&23.682&0.786&25.318&0.848&26.192&0.873&20.206\\
			&WNNR
			&13.683&0.305&17.220&0.460&19.279&0.577	      &21.084&0.680&22.934&0.767&23.726&0.798	&35.519\\
			&D-N
			&18.405&0.515&19.859&0.574&21.084&0.630 &21.437&0.646&21.663&0.657&21.716&0.658&1.696\\
			&F-N
			&18.767&0.527&19.999&0.581&21.118&0.631	 &21.454&0.645&21.664&0.656&21.712&0.658&\textbf{1.088}	\\
			&LRMF
			&18.278&0.501&19.203&0.552&20.115&0.614	 &21.893&0.698&23.342&0.752&23.759&0.766&1.331\\
			&LRQA-2
			&19.211&0.559&20.341&0.611&22.036&0.706	 &23.615&0.780&25.271&0.842&26.086&0.865&73.468\\
			&Q-DNN
			&18.895&0.535&19.939&0.589&21.678&0.686
			&23.037&0.750&24.467&0.807&24.992&0.824&28.448	\\
			&Q-FNN
			&19.176&0.550&20.290&0.603&21.944&0.686 &23.247&0.748&24.790&0.812&26.060&0.859&31.406\\
			&QLNF
			&19.190&0.548&20.284&0.592&21.727&0.663
			&22.571&0.700&22.956&0.715&23.314&0.732&37.021\\
			&TQLNA
			&\textbf{19.396}&\textbf{0.575}&\textbf{20.690}&\textbf{0.631}&\textbf{22.453}&\textbf{0.724}	 &\textbf{23.964}&\textbf{0.793}&\textbf{25.726}&\textbf{0.856}&\textbf{26.565}&\textbf{0.879}&26.266\\
				\bottomrule
				\label{t2}
		\end{tabular}}
	\end{threeparttable}
\end{table*}
\begin{figure}%[htbp]
	\centering
	\includegraphics[width=14cm,keepaspectratio]{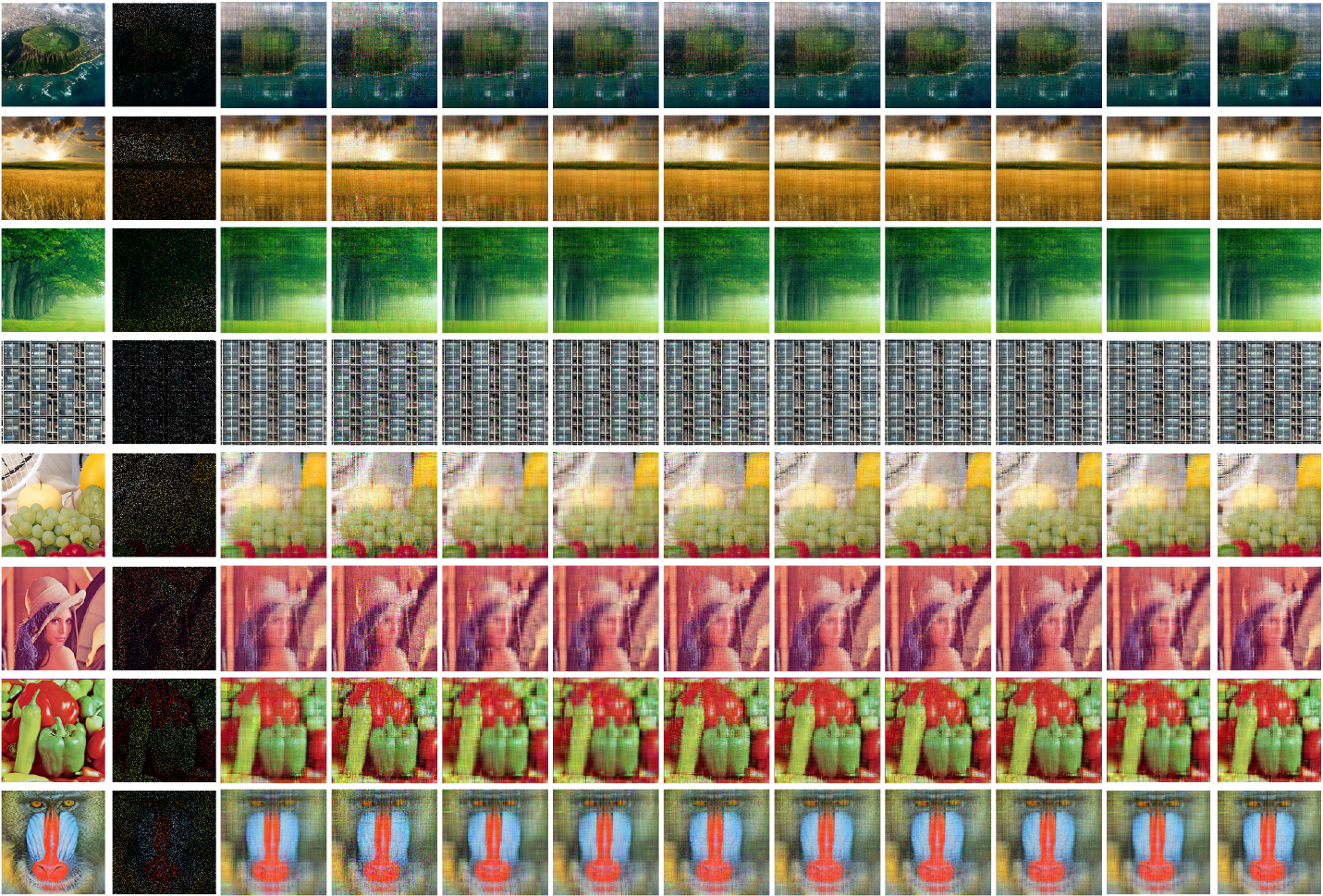}
	\caption{The first column is the original image and the second column is the observed image ( SR=15\%). The 3-\textit{th} to 12-\textit{th} are the completion results of TNNR, WNNR, D-N, F-N, LRMF, LRQA-2, Q-DNN, Q-FNN, QLNF , and TQLNA, respectively. The corresponding parameters are summarized in Table \ref{t1} and Table \ref{t2}.}
	\label{0.85}
\end{figure}
\begin{figure*}
	\centering
	\subfigure[Baboon]{
		\includegraphics[width=3cm]{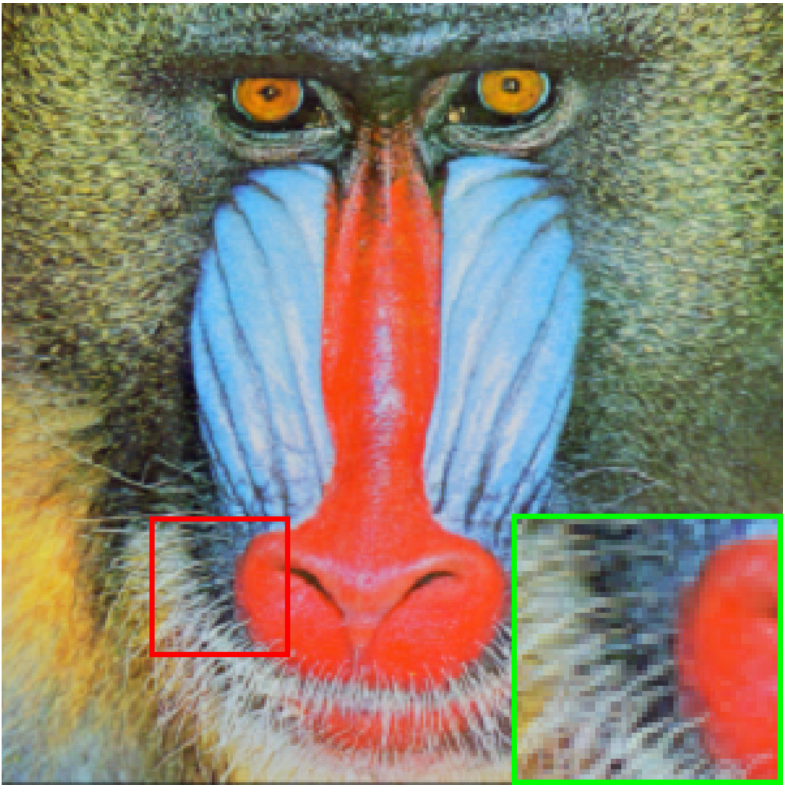}
	}
	\hspace{-0.5cm}
	\subfigure[SR = 20\%]{
		\includegraphics[width=3cm]{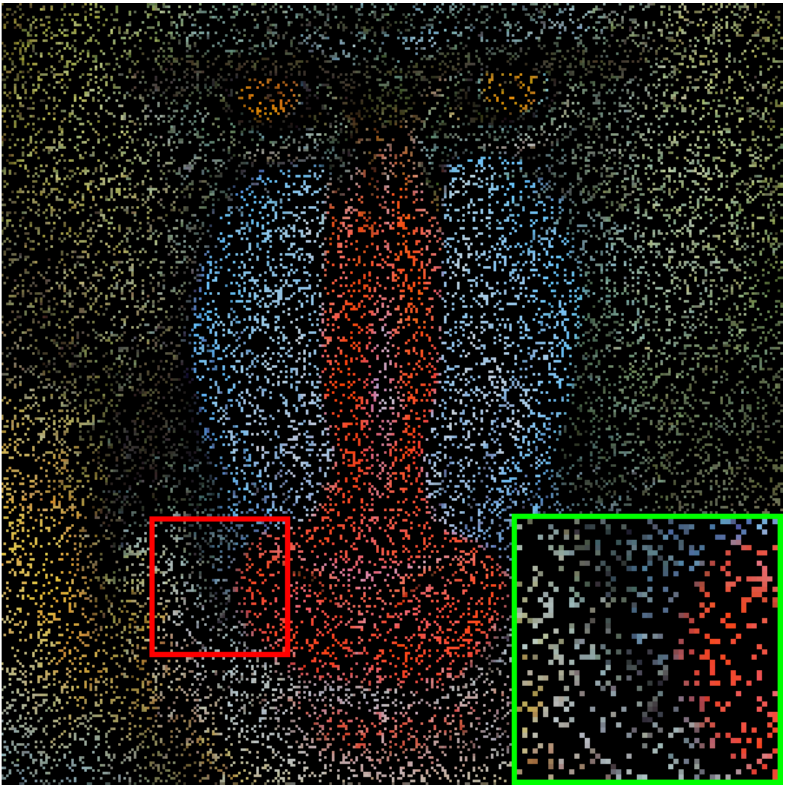}
	}
	\hspace{-0.5cm}
	\subfigure[TNNR]{
		\includegraphics[width=3cm]{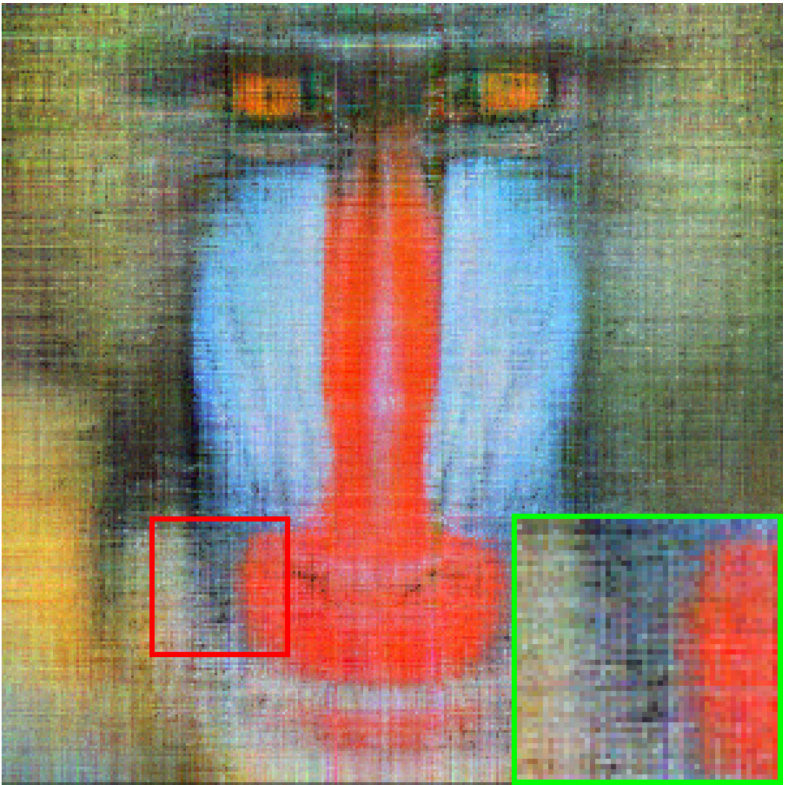}
	}
	\hspace{-0.5cm}
	\subfigure[WNNR]{
		\includegraphics[width=3cm]{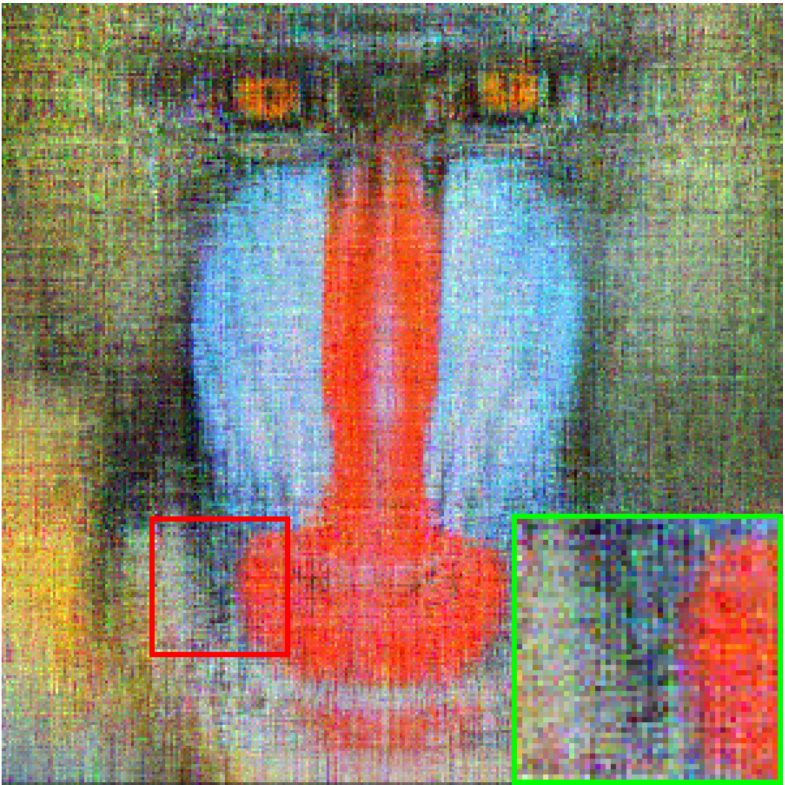}
	}
\\
	\centering
\subfigure[D-N]{
	\includegraphics[width=3cm]{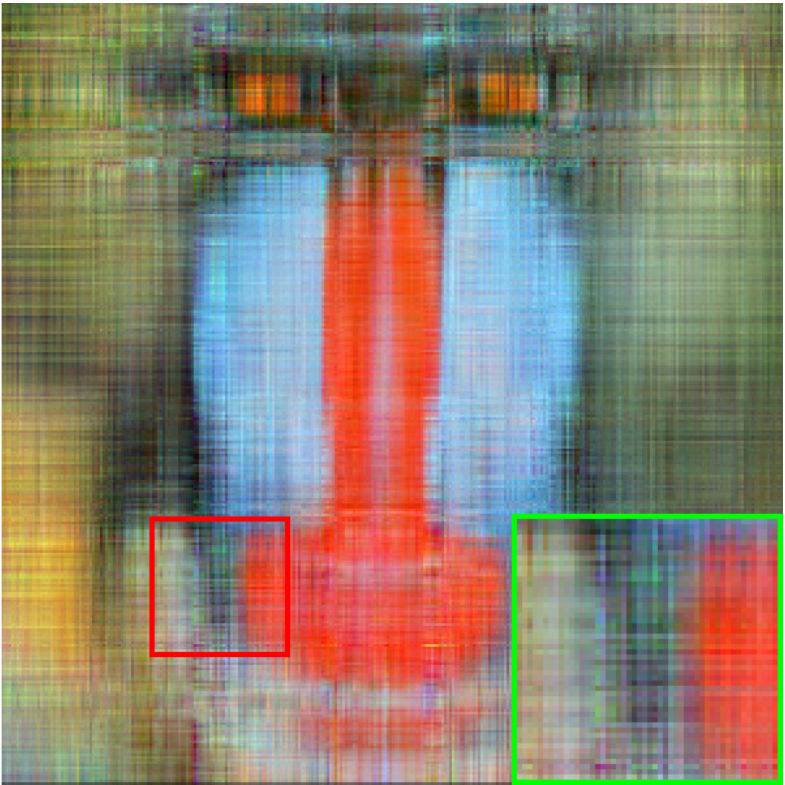}
}
\hspace{-0.5cm}
\subfigure[F-N]{
	\includegraphics[width=3cm]{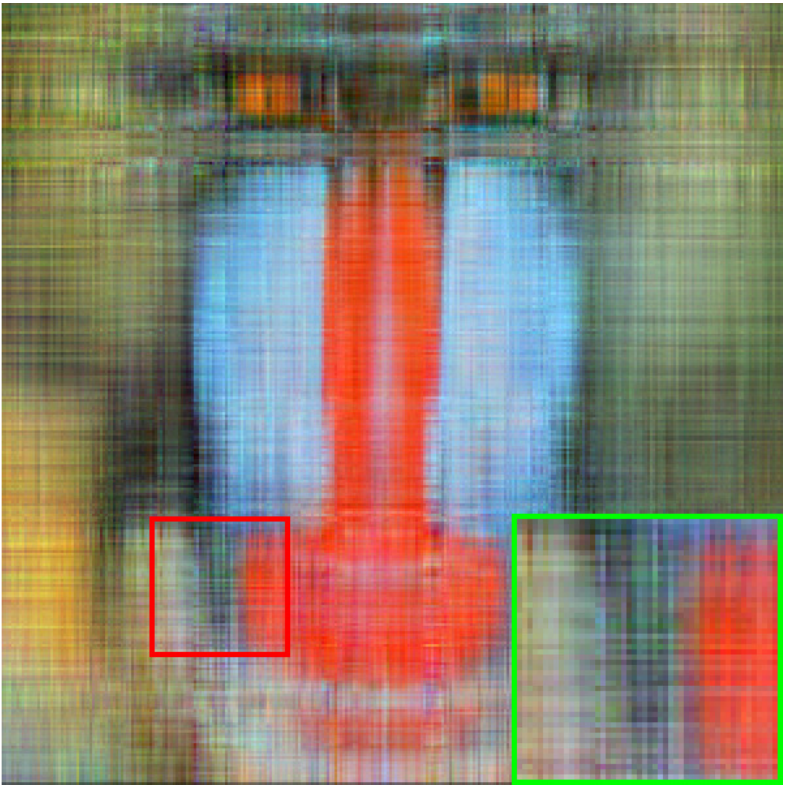}
}
\hspace{-0.5cm}
\subfigure[LRMF]{
	\includegraphics[width=3cm]{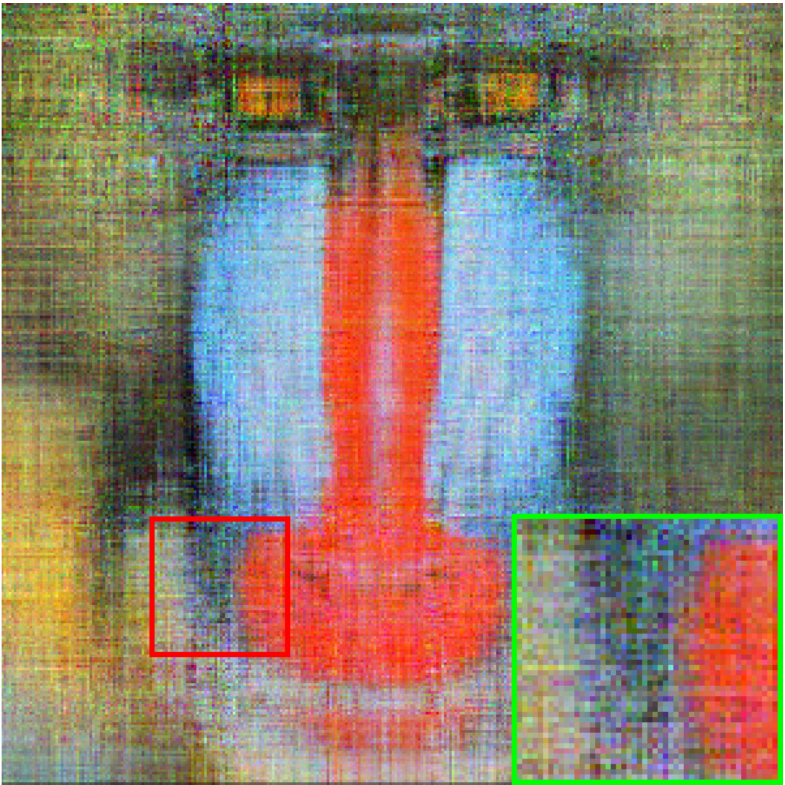}
}
\hspace{-0.5cm}
\subfigure[LRQA-2	]{
	\includegraphics[width=3cm]{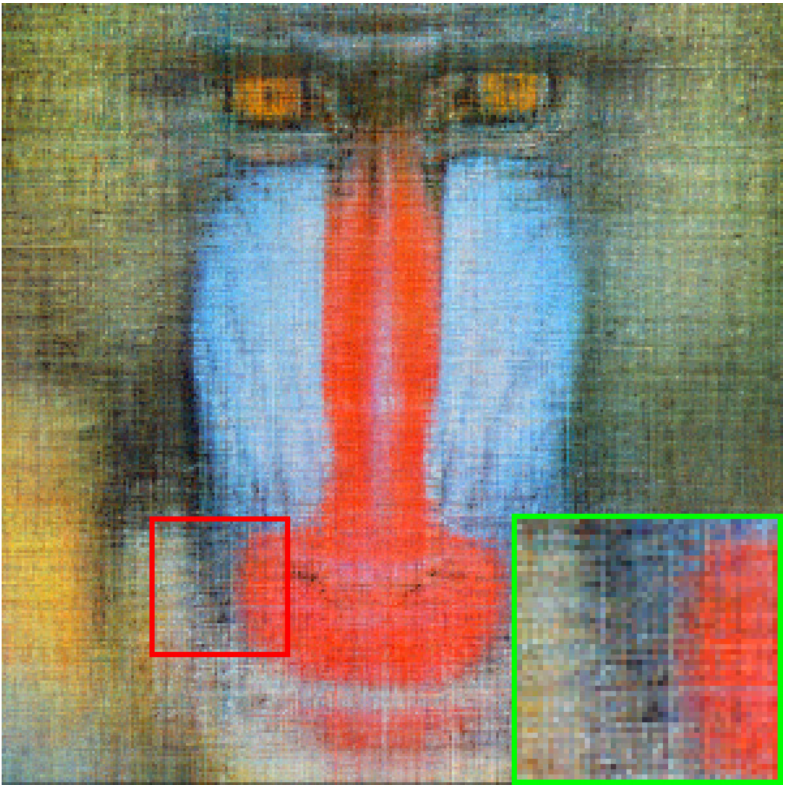}
}
\\
	\centering
\subfigure[	Q-DNN]{
	\includegraphics[width=3cm]{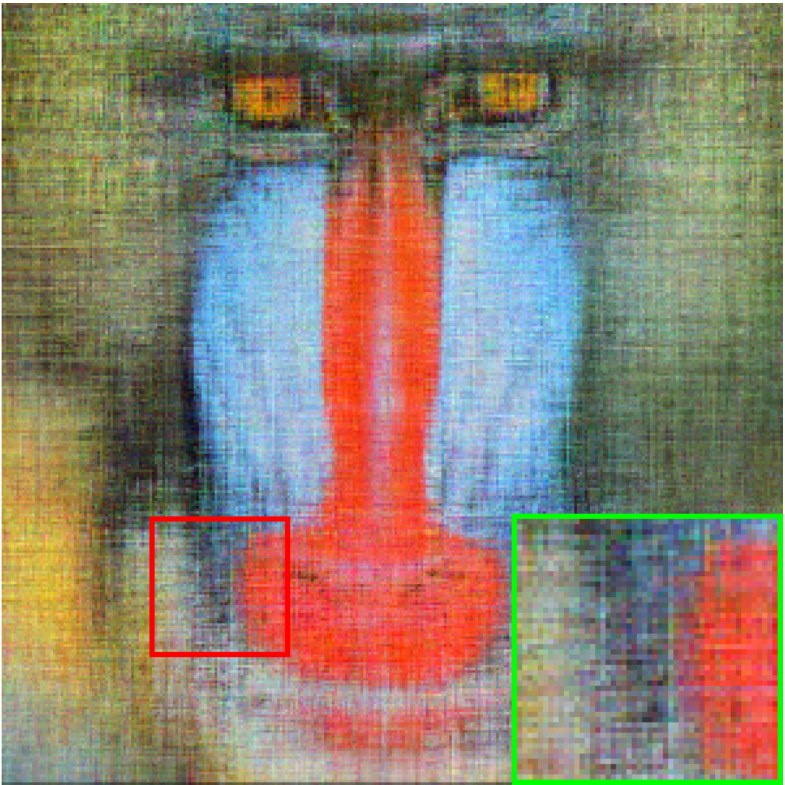}
}
\hspace{-0.5cm}
\subfigure[Q-FNN	]{
	\includegraphics[width=3cm]{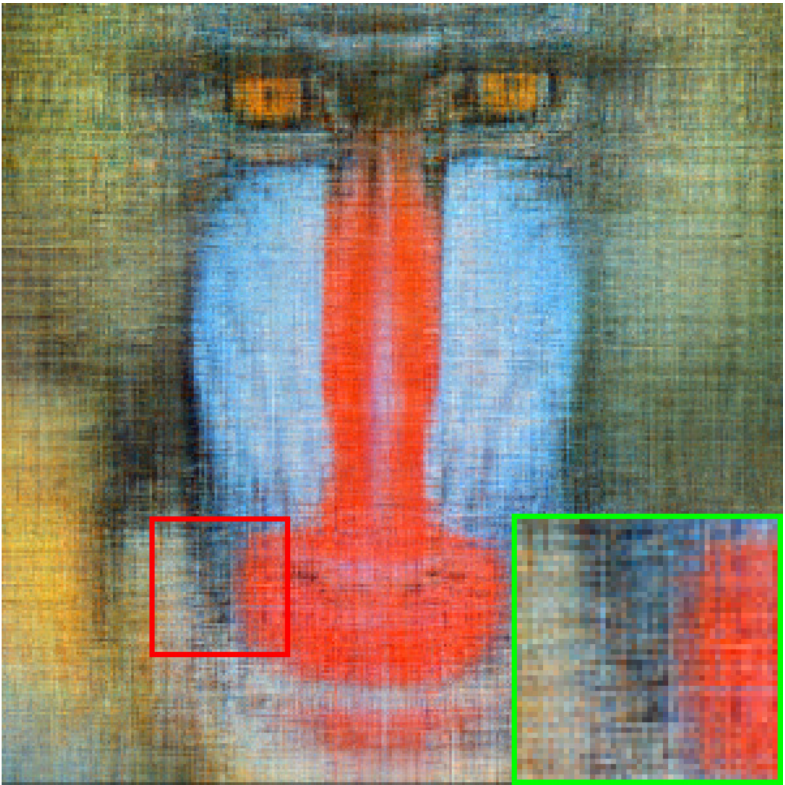}
}
\hspace{-0.5cm}
\subfigure[	QLNF		]{
	\includegraphics[width=3cm]{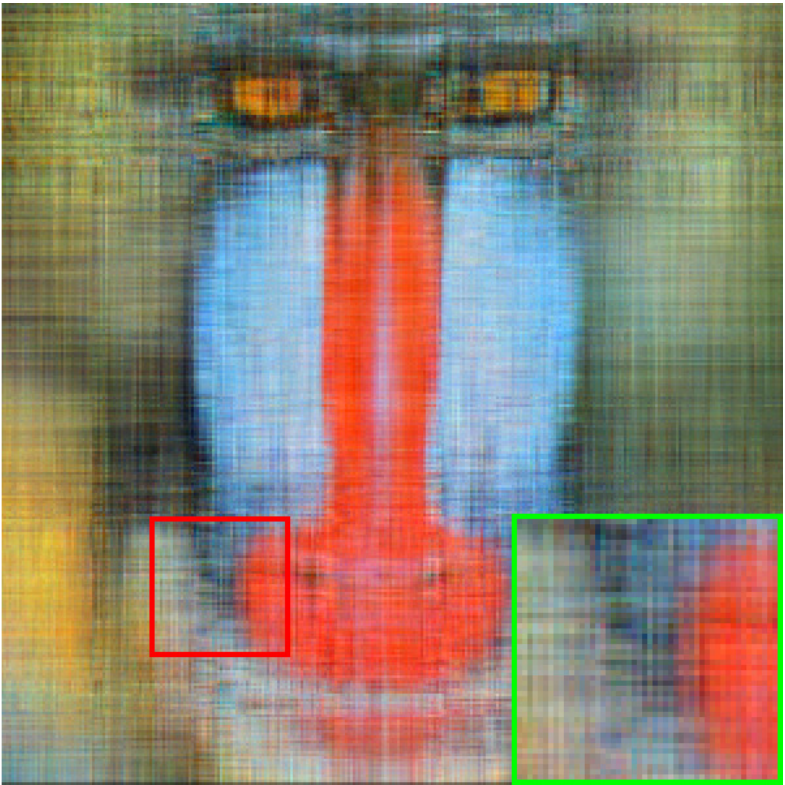}
}
\hspace{-0.5cm}
\subfigure[	TQLNA]{
	\includegraphics[width=3cm]{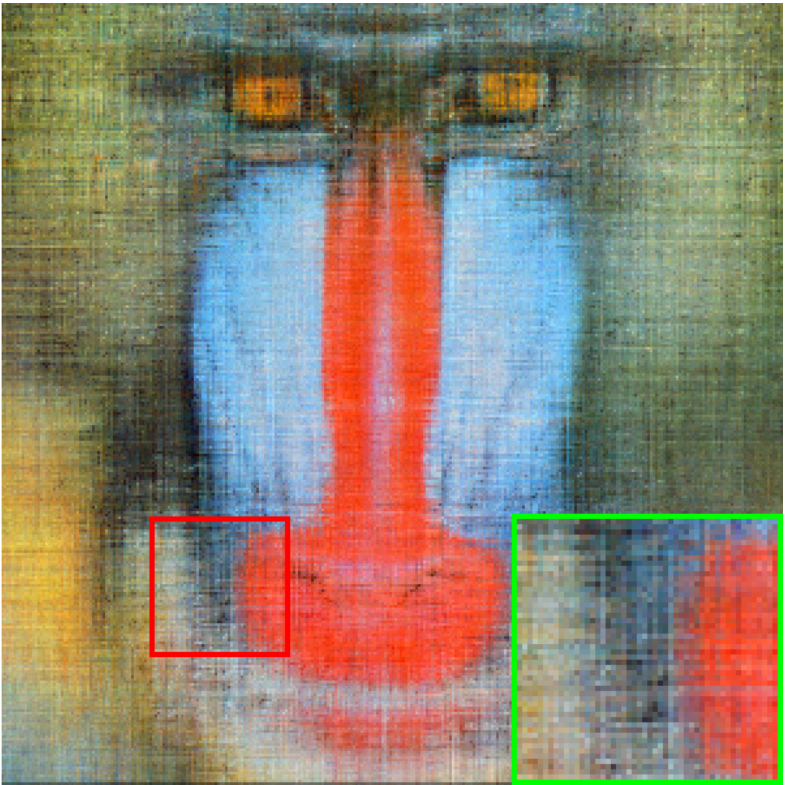}
}
	\caption{  Color image completion results on Image (8). (a) is the original image. (b) is the observed image ( SR=20\%). (c)–(l) are the completion results of  TNNR, WNNR, D-N, F-N, LRMF, LRQA-2, Q-DNN, Q-FNN, QLNF , and TQLNA, respectively. }
	\label{baboon}
\end{figure*}
\begin{table}
	\caption{The corresponding PSNR and SSIM of Figure \ref{baboon}}
	\scalebox{0.7}{\begin{tabular}{ccccccccccc}%{lXl}
			\hline
			Method & TNNR& WNNR& D-N& F-N& LRMF& LRQA-2& Q-DNN& Q-FNN& QLNF &  TQLNA\\
			\hline
			PSNR & 21.175 &18.233 &20.591 &20.682 &19.120 &21.224 &	20.930&21.143 &21.163 &	\textbf{21.593 }
			\\
			\hline
			SSIM & 0.662 &	0.517 &	0.609&	0.612&	0.560 &	0.662 &0.643& 	0.648 &	0.636& 	\textbf{0.680 }
			\\
			\hline
		\end{tabular}\label{babooni}}
\end{table}
\begin{figure}%[htbp]
	\centering
	\includegraphics[width=10cm,keepaspectratio]{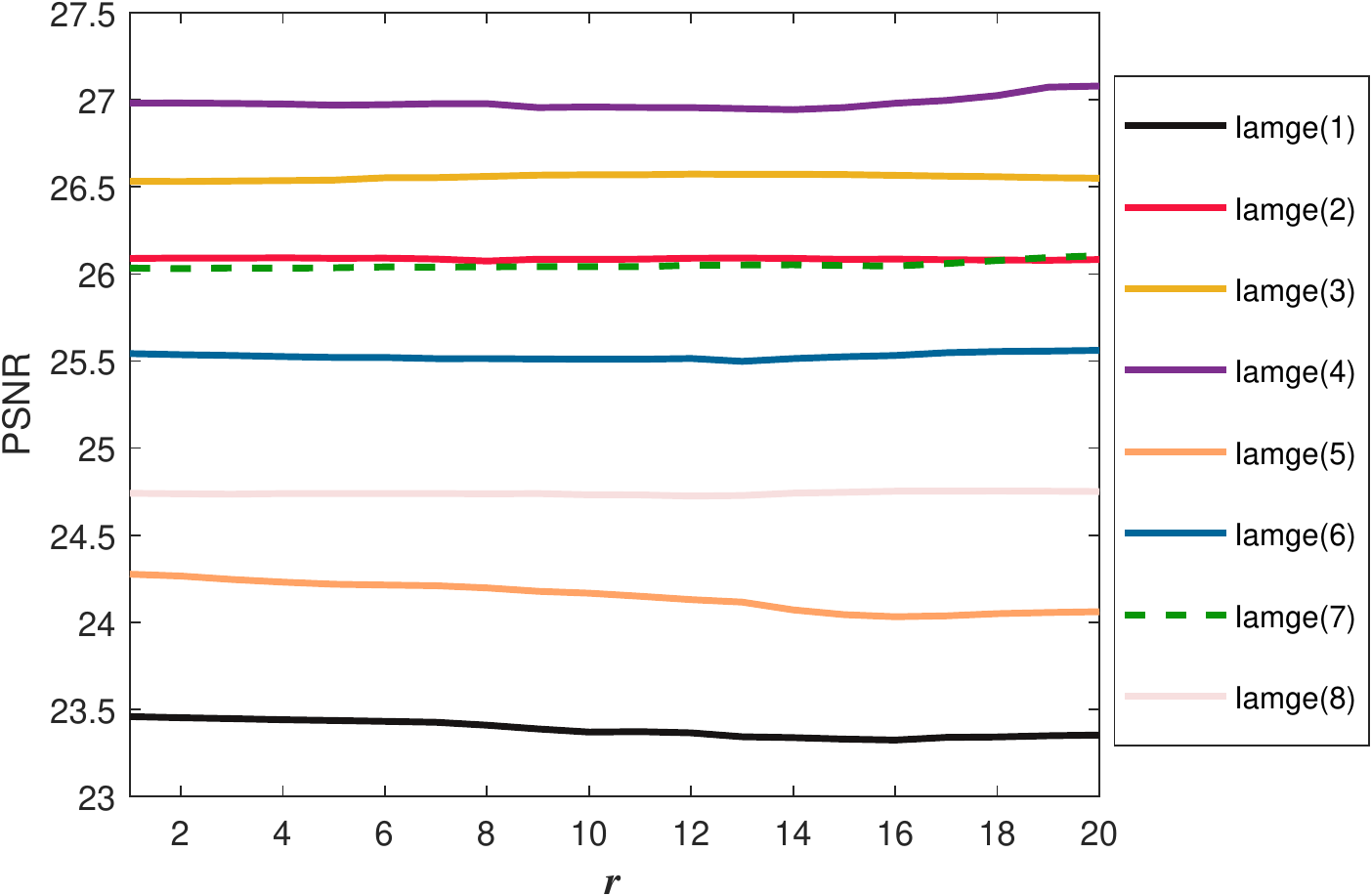}
	\caption{PSNR values of 8 images with different truncated number $r$ when  SR=40\%.}
	\label{r}
\end{figure}
\begin{figure}%[htbp]
	\centering
	\includegraphics[width=14cm,keepaspectratio]{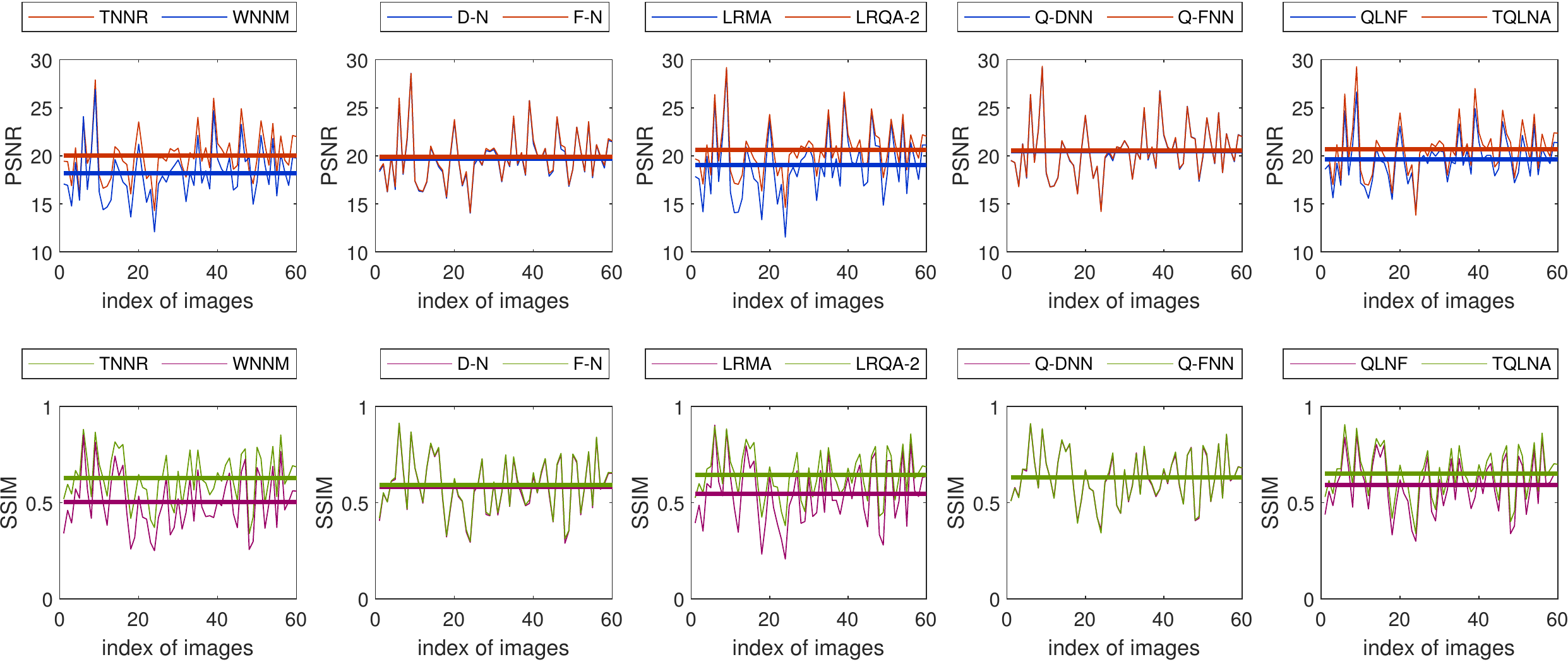}
	\caption{PSNR and SSIM values of 60 images obtained by different compared methods when let SR=20\%. All the straight lines are the average recovered results of these 60 images.}
	\label{0.8}
\end{figure}
\begin{table}
	\caption{The corresponding averge PSNR and SSIM of Figure \ref{0.8}}
	\scalebox{0.7}{\begin{tabular}{ccccccccccc}%{lXl}
			\hline
			Method & TNNR& WNNR& D-N& F-N& LRMF& LRQA-2& Q-DNN& Q-FNN& QLNF &  TQLNA\\
			\hline						
			
			PSNR & 20.043&18.180 &19.718 &19.872	&	19.025&		20.623	&	20.495&	20.535	&19.611&	\textbf{20.703}
			\\
			\hline
			SSIM &0.629 &	0.503 	&	0.583 &	0.591 	&	0.545 	&	0.643 	&	0.629 	&	0.632 &		0.592 	&	\textbf{0.650 }
			\\
			\hline%
			Aver. Time &159.981 &14.000 &1.217 	&\textbf{	0.813} 	&	1.285 	&	59.278 	&29.313 &30.163 &18.685 &22.436 
			\\
			\hline
		\end{tabular}\label{0.8i}}
\end{table}
\begin{figure}%[htbp]
	\centering
	\includegraphics[width=14cm,keepaspectratio]{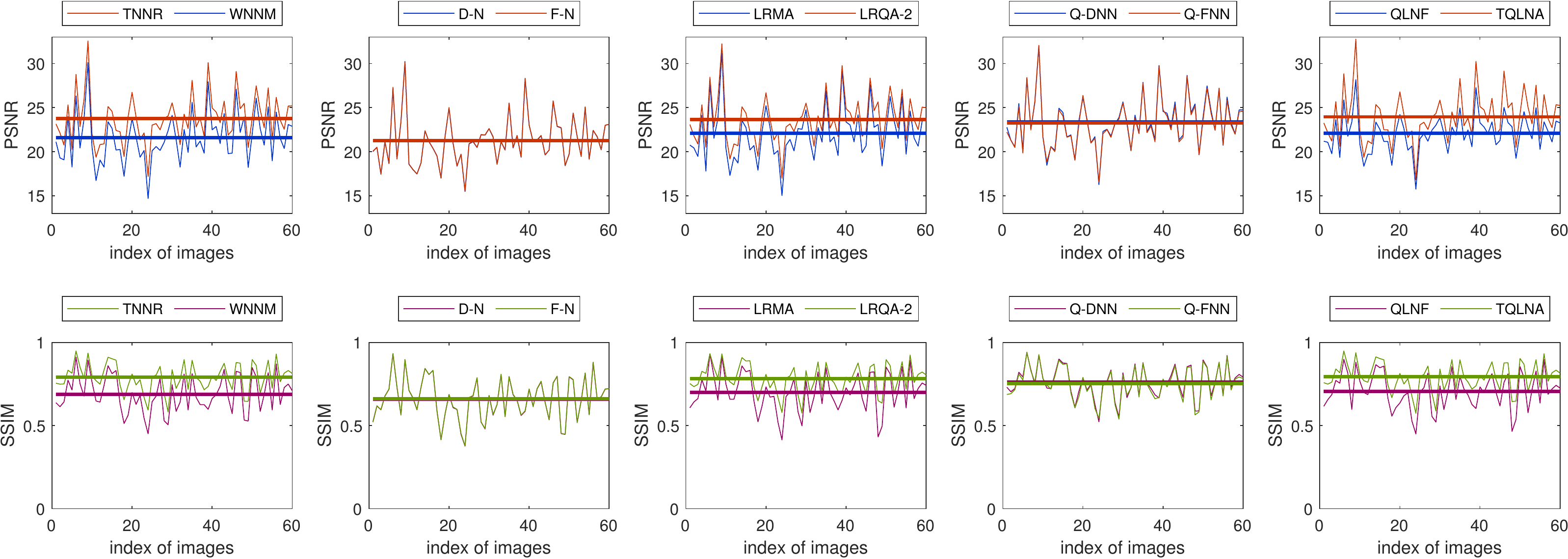}
	\caption{PSNR and SSIM values of 60 images obtained by different compared methods when let  SR=40\%. All the straight lines are the average recovered results of these 60 images.}
	\label{0.6}
\end{figure}
\begin{table}
	\caption{The corresponding averge PSNR and SSIM of Figure \ref{0.8}}
	\scalebox{0.7}{\begin{tabular}{ccccccccccc}%{lXl}
			\hline%
			
			Method & TNNR& WNNR& D-N& F-N& LRMF& LRQA-2& Q-DNN& Q-FNN& QLNF &  TQLNA\\
			\hline						
			
			PSNR &23.775 &	21.606 	&	21.245& 		21.264 	&	22.093 		&23.667 	&	23.388 	&	23.273 	&	22.099 	&	\textbf{23.959 }
			\\
			\hline%
			
			SSIM &0.791 &	0.690 	&	0.660 	&0.661 	&0.700 	&	0.782 &	0.761 &	0.753 	&	0.706 	&	\textbf{0.793 }
			\\
			\hline%
			
			Aver. Time &61.451 	&	30.491 	&	1.405 	&\textbf{	0.981} 	&	0.936 	&	43.223 &		28.437 	&	31.051 	&	56.257 	&	9.789 
			\\
			\hline
		\end{tabular}\label{0.6i}}
\end{table}
\section{Conclusion}
\label{Conclusion}
In this paper, we propose two models for low-rank  quaternion matrix completion by (1) combining low-rank quaternion matrix factorization and logarithmic norm; (2) combining logarithmic norm to depict low rank and truncated strategy. Meanwhile, an efficient FISTA algorithm is developed to solve the proposed QLNF model, and ADMM framework is used to TQLNA model.  Numerical results demonstrate that both visual and quantitative analysis can achieve superior performance by operating our methods.

In order to optimize the timewaster of  QSVD, our future work is to find a more effective way to describe low-rank characteristics  such as taking advantage of some convolutional neural networks (CNN), and further employ it to the quaternion domain \cite{DBLP:conf/eccv/ZhuXXC18}.

\appendix
%All the Appendices are in the supplementary materials.

%\begin{comment}
\section{The PROOF OF Theorem \ref{theorem3}  }
\label{A1}
To prove the theorem \ref{theorem3}, we first introduce the following lemmas.
\begin{lemma}\label{l1}
Let two quaternion matrices $\dot{\mathbf{U}}\in \mathbb{H}^{M\times r}$ and $\dot{\mathbf{V}}\in \mathbb{H}^{N\times r}$ be given with $r\leq\min{M, N}$, for $k=1, \cdots, r$, we have
\begin{equation}\label{am1}
\prod_{i=1}^k\sigma_i(\dot{\mathbf{U}}\dot{\mathbf{V}}^H)\leq\prod_{i=1}^k\sigma_i(\dot{\mathbf{U}})\sigma_i(\dot{\mathbf{V}}).
\end{equation}
\end{lemma}

\begin{proof}
	\begin{equation}
	\begin{aligned}
	\prod_{i=1}^k\sigma_i(\dot{\mathbf{U}}\dot{\mathbf{V}}^H)&=(\prod_{i=1}^k\sigma_i(\dot{\mathbf{U}}\dot{\mathbf{V}}^H)_c)^\frac{1}{2}\\&=(\prod_{i=1}^k\sigma_i(\dot{\mathbf{U}}_c(\dot{\mathbf{V}}^H))_c)^\frac{1}{2}\\&\leq(\prod_{i=1}^k\sigma_i(\dot{\mathbf{U}}_c)\sigma_i(\dot{\mathbf{V}}^H)_c)^\frac{1}{2}\\&=\prod_{i=1}^k\sigma_i(\dot{\mathbf{U}})\prod_{i=1}^k\sigma_i(\dot{\mathbf{V}}),
	\end{aligned}
	\end{equation}
	where the first, second and last equalities can be proved by (\ref{ct})  directly (or refer to \cite{zhang1997quaternions} for more detail), and the inequality follows from \cite{DBLP:books/daglib/0019186} (Theorem 3.3.4)
\end{proof}
Based on lemma 1, we can take the logarithm to both sides of (\ref{am1}), then for $ k=1, \cdots, r$ 
\begin{equation}\label{am2}
\sum_{i=1}^k\log (\sigma_i(\dot{\mathbf{U}}\dot{\mathbf{V}}^H))\leq\sum_{i=1}^k\log(\sigma_i(\dot{\mathbf{U}})\sigma_i(\dot{\mathbf{V}})).
\end{equation}

\begin{lemma}\label{l2}(\cite{DBLP:books/daglib/0019186} ( Corollary 3.2.11))
	Let $\mathbf{x}=[x_i]$ and $\mathbf{y}=[y_i]\in \mathbb{R}^{r}$ be given. $x_1\geqslant \cdots \geqslant x_r\geqslant0$ and $y_1\geqslant \cdots \geqslant y_r\geqslant0$. Then $\sum_{i=1}^kx_i\leq\sum_{i=1}^ky_i$ hold for $k=1, \cdots, r$ if and only if there exists a doubly stochastic matrix $\mathbf{S}\in\mathbf{R}^{r\times r}$ such that the entry-wise inequalities $\mathbf{x}\leq\mathbf{S}\mathbf{y}$ hold.
\end{lemma}
Based on lemma 2, for inequality (\ref{am2}),  when $1\leq k\leq r$, there is a doubly stochastic matrix$\mathbf{S}\in\mathbf{R}^{n\times n}$ such that for $ i=1, \cdots, r$
\begin{equation}\label{am3}
\log (\sigma_i(\dot{\mathbf{U}}\dot{\mathbf{V}}^H))\leq\sum_{i=1}^k\mathbf{S}_{ij}\log(\sigma_j(\dot{\mathbf{U}})\sigma_j(\dot{\mathbf{V}})),
\end{equation}
where $\mathbf{S}_{ij}$ denotes the $(i, j)$-th element of doubly stochastic matrix$\mathbf{S}$ ($\mathbf{S}_{ij} \geqslant 0$, $\sum_{i=1}^r\mathbf{S}_{ij}\geqslant1$, and $\sum_{j=1}^r\mathbf{S}_{ij}\geqslant1$). 
Then we have the following lemma, which follows from \cite{DBLP:books/daglib/0019186} (Theorem 3.3.14)
\begin{lemma}\label{l3}
	Given two quaternion matrices $\dot{\mathbf{U}}\in \mathbb{H}^{M\times r}$ and $\dot{\mathbf{V}}\in \mathbb{H}^{N\times r}$  with $r\leq\min{M, N}$.  Let $\mathit{f}(\cdot)$ be a real-valued function such that $\varphi(x)\equiv\mathit{f}(e^x)$ is convex and increasing on the range $[\min\{\sigma_r(\dot{\mathbf{U}}\dot{\mathbf{V}}^H), \sigma_r(\dot{\mathbf{U}})\sigma_r(\dot{\mathbf{V}})\}, \sigma_1(\dot{\mathbf{U}})\sigma_1(\dot{\mathbf{V}})]$, then for $k=1, \cdots, r$, we have
	\begin{equation}\label{am4}
	\sum_{i=1}^k\mathit{f}(\sigma_i(\dot{\mathbf{U}}\dot{\mathbf{V}}^H))\leq\sum_{i=1}^k\mathit{f}(\sigma_i(\dot{\mathbf{U}})\sigma_i(\dot{\mathbf{V}})).
	\end{equation}
\end{lemma}

\begin{proof}
Because $\varphi(\cdot)$ is an increasing and convex function on the given interval, for inequality (\ref{am3}) we have
\begin{equation}\label{am5}
\begin{aligned}
\sum_{i=1}^k\varphi(\log (\sigma_i(\dot{\mathbf{U}}\dot{\mathbf{V}}^H)))&\leq\sum_{i=1}^k\varphi(\log(\sum_{i=1}^k\mathbf{S}_{ij}\sigma_j(\dot{\mathbf{U}})\sigma_j(\dot{\mathbf{V}})))\\&\leq\sum_{i=1}^k\sum_{j=1}^k\mathbf{S}_{ij}\varphi(\log(\sigma_j(\dot{\mathbf{U}})\sigma_j(\dot{\mathbf{V}})))\\&=\sum_{j=1}^k(\sum_{i=1}^k\mathbf{S}_{ij})\varphi(\log(\sigma_j(\dot{\mathbf{U}})\sigma_j(\dot{\mathbf{V}})))\\&=\sum_{j=1}^k\varphi(\log(\sigma_j(\dot{\mathbf{U}})\sigma_j(\dot{\mathbf{V}}))),
\end{aligned}
\end{equation}
which means that $\sum_{i=1}^k\mathit{f}(\sigma_i(\dot{\mathbf{U}}\dot{\mathbf{V}}^H))\leq\sum_{i=1}^k\mathit{f}(\sigma_i(\dot{\mathbf{U}})\sigma_i(\dot{\mathbf{V}}))$ hold for $k=1, \cdots, r$.
\end{proof}

Proof of Theorem 1:

\begin{proof}
	Since the decomposition  of  $\dot{\mathbf{X}}=\dot{\mathbf{U}}\dot{\mathbf{V}}^H$, where $\dot{\mathbf{U}}\in \mathbb{H}^{M\times r}$ and $\dot{\mathbf{V}}\in \mathbb{H}^{N\times r}$. Denoting $K=\min(M, N, r)$, next we have 
		\begin{equation}
		\begin{aligned}
	2\parallel\dot{\textbf{X}}\parallel_L^{1/2}&=2\sum_{i=1}^{K}\log(\sigma_{i}^{1/2}(\dot{\mathbf{X}})+\epsilon)=\sum_{i=1}^{K}2\log(\sigma_{i}^{1/2}(\dot{\mathbf{U}}\dot{\mathbf{V}}^H)+\epsilon)\\&\leq\sum_{i=1}^{K}2\log(\sigma_{i}^{1/2}(\dot{\mathbf{U}})\sigma_{i}^{1/2}(\dot{\mathbf{V}})+\epsilon)\\&=\sum_{i=1}^{K}\log(\sigma_{i}(\dot{\mathbf{U}})\sigma_{i}(\dot{\mathbf{V}})+2\sqrt{\sigma_{i}(\dot{\mathbf{U}})\sigma_{i}(\dot{\mathbf{V}})}\epsilon+\epsilon^2)\\&\leq\sum_{i=1}^{K}\log(\sigma_{i}(\dot{\mathbf{U}})\sigma_{i}(\dot{\mathbf{V}})+(\sigma_{i}(\dot{\mathbf{U}})+\sigma_{i}(\dot{\mathbf{V}}))\epsilon+\epsilon^2)\\&=\sum_{i=1}^{K}\log(\sigma_{i}((\dot{\mathbf{U}})+\epsilon)(\sigma_{i}(\dot{\mathbf{V}})+\epsilon))\\&=\sum_{i=1}^{K}\log(\sigma_{i}((\dot{\mathbf{U}})+\epsilon))+\sum_{i=1}^{K}\log((\sigma_{i}(\dot{\mathbf{V}})+\epsilon))\\&\leq\sum_{i=1}^{\min(M, r)}\log(\sigma_{i}((\dot{\mathbf{U}})+\epsilon))+\sum_{i=1}^{\min(N, r)}\log((\sigma_{i}(\dot{\mathbf{V}})+\epsilon))\\&=\parallel\dot{\textbf{U}}\parallel_L^{1}+\parallel\dot{\textbf{V}}\parallel_L^{1},
	\end{aligned}
	\end{equation}
where the first inequality follows from Lemma \ref{l3}, as $\varphi(x)=\mathit{f}(e^x)=\log(e^{\frac{x}{2}}+\epsilon)$ which is an increasing and convex function on $[0, +\infty]$. Then the second inequality follows from the Young's inequality \cite{young1912classes}. The last inequality holds due to  $K=\min(M, N, r)\leq\min(M, r)$ and $K=\min(M, N, r)\leq\min(N, r)$ always hold.

Afterwards,  the QSVD of $\dot{\mathbf{X}}$ is $\dot{\mathbf{X}}=\dot{\mathbf{A}}_{\dot{\mathbf{X}}}\dot{\Sigma}_{\dot{\mathbf{X}}}\dot{\mathbf{B}}_{\dot{\mathbf{X}}}^H$. Let $\dot{\mathbf{U}}_\star=\dot{\mathbf{A}}_{\dot{\mathbf{X}}}\dot{\Sigma}_{\dot{\mathbf{X}}}^{\frac{1}{2}}$ and $\dot{\mathbf{V}}_\star=\dot{\mathbf{B}}_{\dot{\mathbf{X}}}\dot{\Sigma}_{\dot{\mathbf{X}}}^{\frac{1}{2}}$. Then we can obtain $\dot{\mathbf{X}}=\dot{\mathbf{U}}_\star\dot{\mathbf{V}}_\star^H$ and $\parallel\dot{\mathbf{X}}\parallel_L^{1/2}=\frac{1}{2}(\parallel\dot{\mathbf{U}}_\star\parallel_L^1+\parallel\dot{\mathbf{V}}_\star\parallel_L^1)$ based on the definition of the logarithmic norm with $p=1/2$.

To sum up, we have 
	\begin{equation}
\parallel\dot{\textbf{X}}\parallel_{L}^{{1}/{2}}=\mathop{\min_{
	\dot{\textbf{U}}, \dot{\textbf{V}}      \atop \dot{\textbf{X}}=\dot{\textbf{U}}\dot{\textbf{V}}^H}} \frac{1}{2}\parallel\dot{\textbf{U}}\parallel_L^1+\frac{1}{2}\parallel\dot{\textbf{V}}\parallel_L^1.
\end{equation}	

\end{proof}

\section{The PROOF OF Theorem \ref{theorem5}}
\label{A2}
\begin{proof}
Assuming that $M\leq N$ the QSVD of the quaternion matrix $\dot{\textbf{X}}\in \mathbb{H}^{M\times N}$ can be represented as:
\begin{equation}
\dot{\mathbf{X}}={\dot{\mathbf{U}}
	\left( \begin{array}{cc}
	\mathbf{\Sigma}_{\tilde{r}} & \mathbf{0}  \\
	\mathbf{0} & \mathbf{0}\\
	\end{array}
	\right )\dot{\mathbf{V}}^H}=\dot{\mathbf{U}}\mathbf{D}\dot{\mathbf{V}}^H
\end{equation}
where $\mathbf{\Sigma}_r=diag({\sigma_1,\cdots, \sigma_{\tilde{r}}})\in\mathbb{R}^{\tilde{r}\times \tilde{r}}$, and all singular values $\sigma_i (i=1,\cdots,r)$ are nonnegative. $\dot{\textbf{U}}\in\mathbb{H}^{M \times M}$ and $\dot{\mathbf{V}}\in\mathbb{H}^{N \times N}$ are two unitary quaternion matrices.
Then,
\begin{equation}
\mid tr(\dot{\mathbf{A}}\dot{\mathbf{X}}\dot{\mathbf{B}}^{H})\mid=
\mid tr(\dot{\mathbf{A}}\dot{\mathbf{U}}\mathbf{D}\dot{\mathbf{V}}^H\dot{\mathbf{B}}^{H})\mid
\end{equation}
Let $\dot{\mathbf{U}}_0=\dot{\mathbf{A}}\dot{\mathbf{U}}=(\dot{u}_{ij})\in\mathbb{H}^{r \times M}$ and $\dot{\mathbf{V}}_0=\dot{\mathbf{B}}\dot{\mathbf{V}}=(\dot{v}_{ij})\in\mathbb{H}^{r \times N}$, distinctly,
$\dot{\mathbf{U}}_0\dot{\mathbf{U}}_0^H=\mathbf{I}_{r\times r}$ and $\dot{\mathbf{V}}_0\dot{\mathbf{V}}_0^H=\mathbf{I}_{r\times r}$. Then we have
\begin{equation}
\begin{aligned}
&\mid tr(\dot{\mathbf{A}}\dot{\mathbf{X}}\dot{\mathbf{B}}^{H})\mid=\mid\dot{\mathbf{U}}_0\mathbf{D}\dot{\mathbf{V}}_0^H\mid
=\mid\sum_{i=1}^r\sum_{j=1}^M\sigma_j\dot{u}_{ij}\bar{\dot{v}}_{ij}\mid
\\&\leq\sum_{i=1}^r\sum_{j=1}^M\sigma_j\mid\dot{u}_{ij}\bar{\dot{v}}_{ij}\mid
=\mid(1,\cdots,1)_{1\times r}\mathbf{P}_{r\times M}(\sigma_i,\cdots,\sigma_M)^T\mid
\\&=\mid(1,\cdots,1,0,\cdots,0)_{1\times M}
\left(
\begin{array}{cc}
\mathbf{P}_{r\times M}  \\
\mathbf{0}_{(M-r)\times M} \\
\end{array}
\right)
(\sigma_i,\cdots,\sigma_M)^T\mid
\end{aligned}
\end{equation}
where $\mathbf{P}_{r\times M}=(\mid\dot{u}_{ij}\bar{\dot{v}}_{ij}\mid)_{r\times M}$.
Because
\begin{equation}
\sum_{i=1}^r\mid\dot{u}_{ij}\bar{\dot{v}}_{ij}\mid\leq\frac{1}{2}[\sum_{i=1}^r\mid\dot{u}_{ij}\mid^2+\sum_{i=1}^r\mid\bar{\dot{v}}_{ij}\mid^2]=1
\end{equation}
\begin{equation}
\sum_{j=1}^M\mid\dot{u}_{ij}\bar{\dot{v}}_{ij}\mid\leq\frac{1}{2}[\sum_{j=1}^M\mid\dot{u}_{ij}\mid^2+\sum_{j=1}^M\mid\bar{\dot{v}}_{ij}\mid^2]\leq1
\end{equation}
, $\mathbf{P}_{r\times M} $ is a doubly-substochastic matrix. According to the theories in \cite{DBLP:books/ap/MarshallO79} (P 136 H.3.b) and \cite{DBLP:books/cu/HJ2012} (Theorem 8.1.4 and Theorem 8.7.6), we have
\begin{equation}
\begin{aligned}
&\mid tr(\dot{\mathbf{A}}\dot{\mathbf{X}}\dot{\mathbf{B}}^{H})\mid
\\&\leq\mid(1,\cdots,1,0,\cdots,0)_{1\times M}
\left(
\begin{array}{cc}
\mathbf{P}_{r\times M}  \\
\mathbf{0}_{(M-r)\times M} \\
\end{array}
\right)
(\sigma_i,\cdots,\sigma_M)^T\mid
\\&\leq\mid(1,\cdots,1,0,\cdots,0)_{1\times M}(\sigma_i,\cdots,\sigma_M)^T\mid
\\&=\sum_{i=1}^r\sigma_i.
\end{aligned}
\end{equation}
When $\dot{\mathbf{A}}= [\mathbf{I}_{r\times r},\mathbf{0}_{(r)\times (M-r)} ]\dot{\mathbf{U}}^H$ and
$\dot{\mathbf{B}}= [\mathbf{I}_{r\times r},\mathbf{0}_{(r)\times (N-r)} ]\dot{\mathbf{V}}^H$, we get
\begin{equation}
\max |tr(\dot{\mathbf{A}}\dot{\mathbf{X}}\dot{\mathbf{B}}^H)|=\sum_{i=1}^r\sigma_{i}(\dot{\mathbf{X}}).
\end{equation}
\end{proof}

%\end{comment}
%\end{appendices}
\section*{Acknowledgements}
This work was supported by the Science and Technology Development Fund, Macau SAR (File no. FDCT/085/2018/A2) and University of Macau (File no. MYRG2019-00039-FST).
%\end{appendices}

%% The Appendices part is started with the command \appendix;
%% appendix sections are then done as normal sections
%% \appendix

%% \section{}
%% \label{}

%% If you have bibdatabase file and want bibtex to generate the
%% bibitems, please use
%%
%%  \bibliographystyle{elsarticle-num}
%%  \bibliography{<your bibdatabase>}

%% else use the following coding to input the bibitems directly in the
%% TeX file.

%\begin{thebibliography}{00}
%
%%% \bibitem{label}
%%% Text of bibliographic item
%
%\bibitem{}
%
%\end{thebibliography}
\bibliographystyle{unsrt}
\bibliography{mybibfile}
\end{document}